
\documentclass[12pt]{article}    

%
\usepackage[margin=0.75in]{geometry} 
\usepackage{amsmath,amssymb,amsthm}

%
%

%
%
\newtheorem{theorem}{Theorem}[section]
\newtheorem{proposition}[theorem]{Proposition}
\newtheorem{lemma}[theorem]{Lemma}

\newcommand{\be}{\begin{equation}}
\newcommand{\ee}{\end{equation}}

\newcommand{\bea}{\begin{eqnarray}}
\newcommand{\eea}{\end{eqnarray}}


\numberwithin{equation}{section}
\linespread{1.6} 

\begin{document}

\title{Painlev\'{e} III$'$ and the Hankel Determinant Generated by a Singularly Perturbed Gaussian Weight}
\author{Chao Min\thanks{School of Mathematical Sciences, Huaqiao University, Quanzhou 362021, China; e-mail: chaomin@hqu.edu.cn}, Shulin Lyu\thanks{School of Mathematics (Zhuhai), Sun Yat-sen University, Zhuhai 519082, China; e-mail: lvshulin1989@163.com} and Yang Chen\thanks{Correspondence to: Yang Chen, Department of Mathematics, Faculty of Science and Technology, University of Macau, Macau, China; e-mail: yangbrookchen@yahoo.co.uk}}


\date{\today}
\maketitle
\begin{abstract}
In this paper, we study the Hankel determinant generated by a singularly perturbed Gaussian weight
$$
w(x,t)=\mathrm{e}^{-x^{2}-\frac{t}{x^{2}}},\;\;x\in(-\infty, \infty),\;\;t>0.
$$
By using the ladder operator approach associated with the orthogonal polynomials, we show that the logarithmic derivative of the Hankel determinant satisfies both a non-linear second order difference equation and a non-linear second order differential equation. The Hankel determinant also admits an integral representation involving a Painlev\'{e} III$'$. Furthermore, we consider the asymptotics of the Hankel determinant under a double scaling, i.e. $n\rightarrow\infty$ and $t\rightarrow 0$ such that $s=(2n+1)t$ is fixed. The asymptotic expansions of the scaled Hankel determinant for large $s$ and small $s$ are established, from which Dyson's constant appears.
\end{abstract}

$\mathbf{Keywords}$: Hankel determinant; Singularly perturbed Gaussian weight; Ladder operators;

Orthogonal polynomials; Painlev\'{e} III$'$; Double scaling.

$\mathbf{Mathematics\:\: Subject\:\: Classification\:\: 2010}$: 42C05, 33E17, 41A60.

\section{Introduction}
Hankel determinants, a fundamental object in Hermitian random matrix ensembles, determine the ``normalization constant'' $D_{n}$ of the multivariate probability density function \cite{Mehta}
$$
p(x_{1},x_{2},\ldots,x_{n})=\frac{1}{D_{n}}\prod_{1\leq j<k\leq n}\left(x_{k}-x_{j}\right)^{2}\prod_{j=1}^{n}w(x_{j}),
$$
where $\{x_j\}_{j=1}^{n}$ are the eigenvalues of any Hermitian matrix ensembles and $w(x)$ is a weight function. The object $D_{n}$ would depend on the parameters which appear in the weight.

Hankel determinants generated by the perturbed Gaussian, Laguerre and Jacobi weights have attracted a lot of interests in the last decade \cite{Basor2010,ChenDai,ChenIts,Dai,Han,Lyu,Xu,Zeng}.
These Hankel determinants and their asymptotics play an important role in the wireless communication system \cite{ChenHaqMckay2013,ChenMcKay2012}, and also many branches of applied mathematics and mathematical physics, such as enumeration problems \cite{BasorChenMekareeya,ChenJokela}, one-dimensional gas of impenetrable bosons \cite{Jimbo1980} and two-dimensional Ising model \cite{McCoy}.

Usually the study of these Hankel determinants is related to the Painlev\'{e} equations. For example, Chen and Its \cite{ChenIts} studied the Hankel determinant for a singularly perturbed Laguerre weight $x^{\alpha}\mathrm{e}^{-x-t/x}$ and showed that the Hankel determinant is the Jimbo-Miwa-Ueno isomonodromy $\tau$-function of a particular Painlev\'{e} III$'$, while Xu, Dai and Zhao \cite{Xu} obtained the uniform asymptotics of this Hankel determinant by using the Riemann-Hilbert approach based on the Deift-Zhou steepest descent method.  Chen and Dai \cite{ChenDai} showed that the logarithmic derivative of the Hankel determinant for the perturbed Jacobi weight $x^{\alpha}(1-x)^{\beta}\mathrm{e}^{-t/x}$ satisfies the Jimbo-Miwa-Okamoto $\sigma$-form of a particular Painlev\'{e} V.
Chen and Chen \cite{Chen2015,Chen2016} established the asymptotic expansion of these two Hankel determinants under a suitable double scaling. The Hankel determinant for a more general singularly perturbed weight $x^{\alpha}e^{-n(V(x)+(t/x)^k)}$ with $V(x)$ real analytic was studied in \cite{Atkin} and its aysmptotics under a double scaling was derived.

In this paper, we consider the Hankel determinant generated by a singularly perturbed Gaussian weight, namely,
$$
D_{n}(t):=\det\left(\int_{-\infty}^{\infty}x^{i+j}w(x,t)dx\right)_{i,j=0}^{n-1},
$$
where
$$
w(x,t):=\mathrm{e}^{-x^{2}-\frac{t}{x^{2}}},\;\;x\in(-\infty, \infty),\;\;t>0.
$$
With the multiplicative factor $\mathrm{e}^{-t/x^2}$, the weight $w(x,t)$ vanishes infinitely fast at $x=0$. This Hankel determinant is related to the Wigner time-delay distribution in chaotic cavities \cite{Texier}.

It is well known that the Hankel determinant $D_{n}(t)$ can be expressed as the product of the square of the $L^{2}$ norms of the monic polynomials orthogonal with respect to $w(x,t)$. In this paper we will use the ladder operators adapted to the orthogonal polynomials to deal with $D_n(t)$. This approach has been widely applied to the study of the Hankel determinants generated by the perturbed Gaussian, Laguerre and Jacobi weights mentioned above, also the problems in random matrix theory \cite{Basor2009,Basor2012,ChenZhang,Lyu2018,Min2018}.

We now introduce some information on the orthogonal polynomials. Let $P_{n}(x,t)$ be the monic polynomials of degree $n$ orthogonal with respect to the weight $w(x,t)$,
\be\label{or}
\int_{-\infty}^{\infty}P_{m}(x,t)P_{n}(x,t)w(x,t)dx=h_{n}(t)\delta_{mn},\;\;m, n=0,1,2,\ldots.
\ee
Since the weight $w(x,t)$ is even, $P_{n}(x,t)$ only contains the even or odd powers of $x$ for $n$ even or odd respectively \cite{Chihara}. That is,
\be\label{expan}
P_{n}(x,t)=x^{n}+\mathrm{p}(n,t)x^{n-2}+\cdots,
\ee
and we will see that $\mathrm{p}(n,t)$, the coefficient of $x^{n-2}$, is very important in the derivation of our main results.

For the orthogonal polynomials $P_{n}(x,t)$, we have the three-term recurrence relation \cite{Szego}
\be\label{rr}
xP_{n}(x,t)=P_{n+1}(x,t)+\beta_{n}(t)P_{n-1}(x,t)
\ee
with the initial conditions
$$
P_{0}(x,t)=1,\;\;\beta_{0}(t)P_{-1}(x,t)=0.
$$
An easy consequence of (\ref{or}), (\ref{expan}) and (\ref{rr}) gives
\be\label{be}
\beta_{n}(t)=\mathrm{p}(n,t)-\mathrm{p}(n+1,t)=\frac{h_{n}(t)}{h_{n-1}(t)},
\ee
and a telescopic sum yields
\be\label{sum}
\sum_{j=0}^{n-1}\beta_{j}(t)=-\mathrm{p}(n,t).
\ee
It is well known that \cite{Ismail}
\be\label{hankel}
D_{n}(t)=\prod_{j=0}^{n-1}h_{j}(t).
\ee
\noindent $\mathbf{Remark.}$ For the general weight, (\ref{rr}) should be replaced by
$$
xP_{n}(x)=P_{n+1}(x)+\alpha_{n}P_{n}(x)+\beta_{n}P_{n-1}(x).
$$
But in our case the weight is even, and this leads to $\alpha_{n}=0$.

The rest of this paper is organized as follows. In Sec. 2, we apply the ladder operators to our problem and obtain a non-linear second order difference equation satisfied by the log-derivative of $D_n(t)$. In Sec. 3, we show that the log-derivative of $D_n(t)$ also satisfies a non-linear second order differential equation. $D_n(t)$ can be expressed as an integral of an auxiliary quantity which satisfies a particular Painlev\'{e} III$'$. Sec. 4 is devoted to the derivation of the asymptotics of $D_{n}(t)$ under a double scaling, i.e. $n\rightarrow\infty, t\rightarrow 0$ such that $s=(2n+1)t$ is fixed. By using the results of Chen and Chen \cite{Chen2015}, we establish the large $s$ and small $s$ expansions. The conclusion is given in Sec. 5.

\section{Ladder Operators and Difference Equations}
In the following discussions, for convenience, we shall not display the $t$ dependence in $P_{n}(x)$, $w(x)$, $h_{n}$ and $\beta_{n}$ unless it is needed. The following lemmas can be found in \cite{ChenIts}; see also \cite{Basor2012,ChenIsmail} for reference.
\begin{lemma}
Suppose that $w(x)$ is a continuous even weight function defined on $(-\infty,\infty)$, and $w(-\infty)=w(\infty)=0$. The monic orthogonal polynomials with respect to $w(x)$ satisfy the following differential recurrence relations:
\be\label{lowering}
P_{n}'(z)=\beta_{n}A_{n}(z)P_{n-1}(z)-B_{n}(z)P_{n}(z),
\ee
\be\label{raising}
P_{n-1}'(z)=(B_{n}(z)+\mathrm{v}'(z))P_{n-1}(z)-A_{n-1}(z)P_{n}(z),
\ee
where
$$
A_{n}(z):=\frac{1}{h_{n}}\int_{-\infty}^{\infty}\frac{\mathrm{v}'(z)-\mathrm{v}'(y)}{z-y}P_{n}^{2}(y)w(y)dy,
$$
$$
B_{n}(z):=\frac{1}{h_{n-1}}\int_{-\infty}^{\infty}\frac{\mathrm{v}'(z)-\mathrm{v}'(y)}{z-y}P_{n}(y)P_{n-1}(y)w(y)dy,
$$
and $\mathrm{v}(z)=-\ln w(z)$.
\end{lemma}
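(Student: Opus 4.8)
The plan is to prove both differential recurrences by expanding the derivative of a monic orthogonal polynomial in the basis $\{P_k\}_{k\ge 0}$, and then identifying the expansion coefficients---after one integration by parts and an application of the Christoffel--Darboux formula---with the integrals defining $A_n$ and $B_n$.

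First I would record the basic consequence of integration by parts: because $w$ decays faster than any polynomial and $w(\pm\infty)=0$, all boundary terms vanish, and since $w'=-\mathrm{v}'w$ one has, for $0\le k\le n$,
$$
\int_{-\infty}^{\infty}P_n'(y)P_k(y)w(y)\,dy=-\int_{-\infty}^{\infty}P_n(y)P_k'(y)w(y)\,dy+\int_{-\infty}^{\infty}\mathrm{v}'(y)P_n(y)P_k(y)w(y)\,dy.
$$
For $k\le n-1$ the first integral on the right is zero since $\deg P_k'\le n-2$, and for $k=n$ both sides vanish for degree/orthogonality reasons; hence $\int P_n'P_k w=\int\mathrm{v}'P_nP_k w$ for all such $k$. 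As $\deg P_n'\le n-1$, this gives the expansion $P_n'(z)=\sum_{k=0}^{n-1}\frac{P_k(z)}{h_k}\int_{-\infty}^{\infty}\mathrm{v}'(y)P_n(y)P_k(y)w(y)\,dy$, and the analogous formula with $n$ replaced by $n-1$.

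For the lowering relation (\ref{lowering}), I would compute the right-hand side directly: inserting the definitions of $A_n$ and $B_n$ and using $\beta_n=h_n/h_{n-1}$,
$$
\beta_nA_n(z)P_{n-1}(z)-B_n(z)P_n(z)=\frac{1}{h_{n-1}}\int_{-\infty}^{\infty}\frac{\mathrm{v}'(z)-\mathrm{v}'(y)}{z-y}\,P_n(y)\bigl[P_{n-1}(z)P_n(y)-P_n(z)P_{n-1}(y)\bigr]w(y)\,dy.
$$
The Christoffel--Darboux formula (itself just a rewriting of the three-term recurrence (\ref{rr})) replaces the bracket by $-h_{n-1}(z-y)\sum_{k=0}^{n-1}P_k(z)P_k(y)/h_k$, cancelling the denominator; in the resulting sum the piece proportional to $\mathrm{v}'(z)$ drops by orthogonality of $P_n$ against $P_k$ ($k\le n-1$), and what is left is exactly $\sum_{k=0}^{n-1}\frac{P_k(z)}{h_k}\int\mathrm{v}'(y)P_n(y)P_k(y)w(y)\,dy=P_n'(z)$. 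The raising relation (\ref{raising}) follows from the mirror computation applied to $A_{n-1}(z)P_n(z)-B_n(z)P_{n-1}(z)$: after the same substitution the $\mathrm{v}'(z)$ piece now survives only through the $k=n-1$ term, producing $\mathrm{v}'(z)P_{n-1}(z)$, while the rest equals $\sum_{k=0}^{n-1}\frac{P_k(z)}{h_k}\int\mathrm{v}'(y)P_{n-1}(y)P_k(y)w(y)\,dy$, whose $k=n-1$ term vanishes because $\int\mathrm{v}'P_{n-1}^2w=2\int P_{n-1}'P_{n-1}w=0$, so the sum is $P_{n-1}'(z)$. Rearranging yields $P_{n-1}'(z)=(B_n(z)+\mathrm{v}'(z))P_{n-1}(z)-A_{n-1}(z)P_n(z)$.

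The only point needing real care is analytic: one must check convergence of the integrals and the vanishing of the boundary terms in the integration by parts. This is where the hypotheses are used---$\mathrm{v}=-\ln w$ is smooth away from the zeros of $w$, the conditions $w(\pm\infty)=0$ and the existence of all moments force $P(y)^2w(y)\to0$ at infinity, and the difference quotient $(\mathrm{v}'(z)-\mathrm{v}'(y))/(z-y)$ is controlled by $\mathrm{v}'$ so it is integrable against $P^2w$. For $w(x,t)=\mathrm{e}^{-x^2-t/x^2}$ all of this is immediate (the extra zero at $x=0$ only suppresses the pole of $\mathrm{v}'(y)=2y-2t/y^3$ there), so the general result of \cite{ChenIts} applies. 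Everything else is routine manipulation with orthogonality.
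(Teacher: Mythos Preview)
The paper does not prove this lemma; it cites \cite{ChenIts} (and \cite{Basor2012,ChenIsmail}) for it. Your argument is correct and is precisely the standard derivation found in those references: expand $P_n'$ in the orthogonal basis, convert the coefficients via integration by parts into $\int \mathrm{v}'P_nP_k\,w$, and then use the Christoffel--Darboux identity to collapse the sum into the combinations $\beta_nA_nP_{n-1}-B_nP_n$ and $A_{n-1}P_n-B_nP_{n-1}$. Your handling of the raising relation---noting that the $\mathrm{v}'(z)$ piece survives only through the $k=n-1$ term while $\int \mathrm{v}'P_{n-1}^2w=2\int P_{n-1}'P_{n-1}w=0$---is exactly right. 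One cosmetic remark: evenness of $w$ plays no role in the argument (the Christoffel--Darboux formula holds for any weight), so your proof actually establishes the lemma in the generality of \cite{ChenIsmail}; the paper states it for even $w$ only because that is the setting of interest here.
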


\begin{lemma}\label{s1s2}
The functions $A_{n}(z)$ and $B_{n}(z)$ satisfy the conditions:
\be
B_{n+1}(z)+B_{n}(z)=z A_{n}(z)-\mathrm{v}'(z), \tag{$S_{1}$}
\ee
\be
1+z(B_{n+1}(z)-B_{n}(z))=\beta_{n+1}A_{n+1}(z)-\beta_{n}A_{n-1}(z). \tag{$S_{2}$}
\ee
\end{lemma}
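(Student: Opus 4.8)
The plan is to obtain $(S_1)$ and $(S_2)$ directly from the lowering and raising relations (\ref{lowering})--(\ref{raising}) together with the three-term recurrence (\ref{rr}), treating every identity as a polynomial identity in $z$; no properties of $A_n$ and $B_n$ beyond (\ref{lowering})--(\ref{raising}) are needed.

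\emph{Deriving $(S_1)$.} First I would write down two expressions for $P_n'(z)$: one from the lowering relation (\ref{lowering}) at index $n$, and another from the raising relation (\ref{raising}) with $n$ replaced by $n+1$. Equating them gives
$$\beta_n A_n(z)P_{n-1}(z) - B_n(z)P_n(z) = \big(B_{n+1}(z)+\mathrm{v}'(z)\big)P_n(z) - A_n(z)P_{n+1}(z),$$
and collecting the $A_n(z)$ terms yields $A_n(z)\big(P_{n+1}(z)+\beta_n P_{n-1}(z)\big) = \big(B_{n+1}(z)+B_n(z)+\mathrm{v}'(z)\big)P_n(z)$. The recurrence (\ref{rr}) identifies the left-hand bracket with $zP_n(z)$, and dividing through by $P_n(z)$ — legitimate since this is an identity of rational functions and $P_n\not\equiv 0$ — gives exactly $(S_1)$.

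\emph{Deriving $(S_2)$.} Next I would differentiate (\ref{rr}) to get $P_n(z)+zP_n'(z) = P_{n+1}'(z)+\beta_n P_{n-1}'(z)$ and eliminate all three derivatives: $P_n'$ and (after $n\to n+1$) $P_{n+1}'$ via (\ref{lowering}), and $P_{n-1}'$ via (\ref{raising}). I would then use (\ref{rr}) once more to replace $P_{n+1}(z)$ by $zP_n(z)-\beta_n P_{n-1}(z)$, so that everything is expressed in the basis $\{P_{n-1}(z),P_n(z)\}$. Since these two polynomials have different degrees, they are linearly independent, so the coefficients on both sides may be matched: the coefficient of $P_n(z)$ produces $1+z\big(B_{n+1}(z)-B_n(z)\big) = \beta_{n+1}A_{n+1}(z)-\beta_n A_{n-1}(z)$, which is $(S_2)$, while the coefficient of $P_{n-1}(z)$ reproduces $(S_1)$ and serves as a consistency check.

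\emph{Main difficulty.} There is no deep obstacle here — the argument is careful bookkeeping — but the steps requiring attention are: performing the index shifts $n\to n+1$ in (\ref{lowering})--(\ref{raising}) consistently; handling the boundary case $n=0$, where $\beta_0 P_{-1}\equiv 0$; and justifying the cancellation of $P_n(z)$ and the coefficient-matching as genuine polynomial (equivalently, rational-function) identities rather than mere pointwise equalities. An alternative route works straight from the integral definitions of $A_n$ and $B_n$ via integration by parts, but channelling everything through the recurrence relation (\ref{rr}) is the cleanest option.
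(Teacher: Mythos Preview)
Your argument is correct: equating the two expressions for $P_n'(z)$ coming from (\ref{lowering}) and from (\ref{raising}) shifted by one, together with the recurrence (\ref{rr}), yields $(S_1)$; and differentiating (\ref{rr}), eliminating the derivatives via (\ref{lowering})--(\ref{raising}), and matching coefficients in the basis $\{P_{n-1},P_n\}$ yields $(S_2)$, with the $P_{n-1}$-coefficient reproducing $(S_1)$ as you note. The paper itself does not supply a proof of this lemma --- it simply quotes the result from \cite{ChenIts} (see also \cite{Basor2012,ChenIsmail}) --- and your derivation is precisely the standard one given in those references, so there is nothing further to compare.
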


The combination of ($S_{1}$) and ($S_{2}$) produces a sum rule.
\begin{lemma}\label{s2p}
$A_{n}(z)$, $B_{n}(z)$ and $\sum_{j=0}^{n-1}A_{j}(z)$ satisfy the identity
\be
B_{n}^{2}(z)+\mathrm{v}'(z)B_{n}(z)+\sum_{j=0}^{n-1}A_{j}(z)=\beta_{n}A_{n}(z)A_{n-1}(z). \tag{$S_{2}'$}
\ee
\end{lemma}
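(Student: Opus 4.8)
The plan is to derive $(S_2')$ by combining the two compatibility conditions $(S_1)$ and $(S_2)$ of Lemma \ref{s1s2} in a way that produces a telescoping sum. The standard trick here is to multiply $(S_2)$ through by a suitable factor so that the right-hand side, involving $\beta_{n+1}A_{n+1}-\beta_n A_{n-1}$, collapses under summation over $n$. First I would multiply $(S_2)$ by $A_n(z)$; the right-hand side then becomes $\beta_{n+1}A_{n+1}A_n - \beta_n A_n A_{n-1}$, which is exactly of the form $Q_{n+1}-Q_n$ with $Q_n := \beta_n A_n A_{n-1}$, so summing from $n=0$ to $N-1$ telescopes to $Q_N = \beta_N A_N A_{N-1}$ (using that $Q_0 = \beta_0 A_0 A_{-1} = 0$ by the convention $\beta_0 P_{-1}=0$, hence $\beta_0 = 0$ in this even-weight, half-line-free setting — or more carefully, $A_{-1}$ is handled by the same boundary convention as in \cite{ChenIts}).

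The left-hand side after multiplying by $A_n(z)$ reads $A_n(z) + z\,A_n(z)\big(B_{n+1}(z)-B_n(z)\big)$. The task is then to show that $\sum_{n=0}^{N-1}\big[A_n + z A_n(B_{n+1}-B_n)\big]$ equals $\sum_{n=0}^{N-1}A_n + B_N^2 + \mathrm{v}'(z)B_N$. The first piece is already $\sum_{n=0}^{N-1}A_n$, so what must be shown is that $z\sum_{n=0}^{N-1}A_n(B_{n+1}-B_n) = B_N^2 + \mathrm{v}'(z)B_N$ (with the $n=0$ boundary term vanishing, since $B_0(z)=0$). Here is where $(S_1)$ enters: from $(S_1)$ we have $zA_n = B_{n+1}+B_n+\mathrm{v}'(z)$, so $z A_n(B_{n+1}-B_n) = (B_{n+1}+B_n+\mathrm{v}'(z))(B_{n+1}-B_n) = B_{n+1}^2 - B_n^2 + \mathrm{v}'(z)(B_{n+1}-B_n)$. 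This is again a perfect telescoping difference, and summing from $n=0$ to $N-1$ gives $B_N^2 - B_0^2 + \mathrm{v}'(z)(B_N - B_0) = B_N^2 + \mathrm{v}'(z)B_N$, using $B_0(z)=0$. Renaming $N$ as $n$ yields $(S_2')$.

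I do not anticipate a genuine obstacle: the argument is a double application of the telescoping idea. The one point requiring care is the treatment of the boundary terms at the lower end of the summation — specifically verifying $B_0(z)=0$ (immediate from the definition of $B_n$ with $h_{-1}$ or the convention that the $n=0$ integrand involves $P_{-1}\equiv 0$) and checking that $\beta_0 A_0 A_{-1}$ drops out consistently, which is exactly the convention already in force via $\beta_0(t)P_{-1}(x,t)=0$. A secondary bookkeeping point is making sure the two telescoping sums are aligned on the same index range so that the surviving terms match; writing both sums explicitly with the same limits $n=0,\dots,N-1$ before collapsing them avoids any off-by-one confusion. Once those boundary checks are in place, the identity $(S_2')$ follows directly by equating the two evaluations of $\sum_{n=0}^{N-1}A_n(z)\cdot(S_2)_n$.
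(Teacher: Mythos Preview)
Your proposal is correct and is exactly the standard derivation the paper alludes to with the sentence ``The combination of ($S_{1}$) and ($S_{2}$) produces a sum rule'': multiply ($S_2$) by $A_n(z)$ so the right side becomes the telescoping difference $\beta_{n+1}A_{n+1}A_n-\beta_nA_nA_{n-1}$, use ($S_1$) to rewrite $zA_n$ on the left so that $zA_n(B_{n+1}-B_n)=B_{n+1}^2-B_n^2+\mathrm{v}'(z)(B_{n+1}-B_n)$ also telescopes, and sum with the boundary conventions $B_0(z)=0$ and $\beta_0P_{-1}=0$. The paper does not spell this out but defers to \cite{ChenIts,Basor2012,ChenIsmail}, where precisely this argument appears.
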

\noindent $\mathbf{Remark.}$ The three identities ($S_{1}$), ($S_{2}$) and ($S_{2}'$) are valid for $z\in \mathbb{C}\cup\{\infty\}$.

Solving for $P_{n-1}(z)$ from (\ref{lowering}) and plugging it into (\ref{raising}), we come to the following result with the aid of ($S_{2}'$).
\begin{lemma}
$P_{n}(z)$ satisfies the following second order differential equation:
$$
P_{n}''(z)-\left(\mathrm{v}'(z)+\frac{A_{n}'(z)}{A_{n}(z)}\right)P_{n}'(z)+\left(B_{n}'(z)-B_{n}(z)\frac{A_{n}'(z)}{A_{n}(z)}
+\sum_{j=0}^{n-1}A_{j}(z)\right)P_{n}(z)=0.
$$
\end{lemma}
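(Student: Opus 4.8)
The plan is to derive the stated second-order ODE for $P_n(z)$ by eliminating $P_{n-1}(z)$ between the lowering relation (\ref{lowering}) and the raising relation (\ref{raising}), using the sum rule $(S_2')$ to clean up the result. First I would solve (\ref{lowering}) algebraically for $P_{n-1}(z)$, obtaining
$$
P_{n-1}(z)=\frac{P_{n}'(z)+B_{n}(z)P_{n}(z)}{\beta_{n}A_{n}(z)}.
$$
This is legitimate as an identity of rational functions of $z$ (the zeros of $A_n$ are handled by the remark that $(S_1)$, $(S_2)$, $(S_2')$ hold on all of $\mathbb{C}\cup\{\infty\}$), so I need not worry about those points separately. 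Then I would differentiate this expression to get $P_{n-1}'(z)$ in terms of $P_n$, $P_n'$, $P_n''$ and the coefficient functions $A_n$, $B_n$ and their derivatives.

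Next I would substitute both $P_{n-1}$ and $P_{n-1}'$ into the raising relation (\ref{raising}). After multiplying through by $\beta_n A_n(z)$ to clear denominators, the $P_{n-1}'$ term contributes $P_n''$ plus lower-order pieces; collecting the coefficients of $P_n''(z)$, $P_n'(z)$, and $P_n(z)$ gives a relation of the form
$$
P_n''(z)-\left(\mathrm{v}'(z)+\frac{A_n'(z)}{A_n(z)}\right)P_n'(z)+\left(B_n'(z)-B_n(z)\frac{A_n'(z)}{A_n(z)}+B_n^2(z)+\mathrm{v}'(z)B_n(z)-\beta_n A_n(z)A_{n-1}(z)\right)P_n(z)=0.
$$
The coefficient of $P_n'$ emerges directly from the $-A_n'/A_n$ arising in differentiating the quotient together with the $B_n+\mathrm v'$ prefactor in (\ref{raising}). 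The coefficient of $P_n$, before simplification, is exactly $B_n'(z)-B_n(z)A_n'(z)/A_n(z)+B_n^2(z)+\mathrm v'(z)B_n(z)-\beta_n A_n(z)A_{n-1}(z)$, and here is where $(S_2')$ enters: it tells us precisely that $B_n^2(z)+\mathrm v'(z)B_n(z)-\beta_n A_n(z)A_{n-1}(z)=-\sum_{j=0}^{n-1}A_j(z)$, so the bracket collapses to $B_n'(z)-B_n(z)A_n'(z)/A_n(z)+\sum_{j=0}^{n-1}A_j(z)$, which is the claimed form.

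I expect the only real bookkeeping hazard to be the differentiation of the quotient for $P_{n-1}'$ and the correct tracking of signs when substituting into (\ref{raising}); everything else is forced. One subtlety worth noting in the write-up is that $\beta_n$ may vanish only trivially (it is a ratio of positive $h$'s, hence positive), so division by $\beta_n A_n(z)$ is harmless generically in $z$ and then the final polynomial identity extends everywhere by continuity. I would present the elimination compactly, invoke $(S_2')$ at the single decisive moment, and state that the remaining simplification is a direct calculation, so the main obstacle is purely organizational rather than conceptual.
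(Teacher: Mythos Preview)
Your approach is exactly the paper's: solve (\ref{lowering}) for $P_{n-1}$, differentiate, substitute into (\ref{raising}), and invoke $(S_2')$. One bookkeeping slip to fix: in your displayed intermediate equation the last three terms in the $P_n$-coefficient have the wrong signs. Carrying the substitution through gives
\[
B_n'(z)-B_n(z)\frac{A_n'(z)}{A_n(z)}-B_n^2(z)-\mathrm v'(z)B_n(z)+\beta_n A_n(z)A_{n-1}(z),
\]
and then $(S_2')$ reads $-B_n^2-\mathrm v'B_n+\beta_nA_nA_{n-1}=\sum_{j=0}^{n-1}A_j$, yielding the stated coefficient with the correct $+\sum A_j$. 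As written, your intermediate expression combined with your stated form of $(S_2')$ would produce $-\sum A_j$. This is only the sign-tracking hazard you already anticipated; the argument itself is sound and identical to the paper's.
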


For the problem at hand, we have
$$
w(x)=\mathrm{e}^{-x^{2}-\frac{t}{x^{2}}},\;\;x\in(-\infty, \infty),\;\;t>0,
$$
$$
\mathrm{v}(z)=z^{2}+\frac{t}{z^{2}},
$$
$$
\mathrm{v}'(z)=2z-\frac{2t}{z^3},
$$
$$
\frac{\mathrm{v}'(z)-\mathrm{v}'(y)}{z-y}=2+\frac{2t}{zy^3}+\frac{2t}{z^2y^2}+\frac{2t}{z^3y}.
$$
Using these results, we obtain the following proposition.
\begin{proposition}
For our problem, we have
\be\label{anz}
A_{n}(z)=2+\frac{R_{n}(t)}{z^2},
\ee
\be\label{bnz}
B_{n}(z)=\frac{r_{n}(t)}{z}+\frac{(1-(-1)^n)t}{z^3},
\ee
where
$$
R_{n}(t):=\frac{2t}{h_{n}}\int_{-\infty}^{\infty}\frac{1}{y^2} P_{n}^{2}(y)w(y)dy,
$$
$$
r_{n}(t):=\frac{2t}{h_{n-1}}\int_{-\infty}^{\infty}\frac{1}{y^3} P_{n}(y)P_{n-1}(y)w(y)dy.
$$
\end{proposition}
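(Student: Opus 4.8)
The plan is to insert the divided-difference expression $\frac{\mathrm{v}'(z)-\mathrm{v}'(y)}{z-y}=2+\frac{2t}{zy^{3}}+\frac{2t}{z^{2}y^{2}}+\frac{2t}{z^{3}y}$ directly into the integral representations of $A_{n}(z)$ and $B_{n}(z)$ recalled above, and then to evaluate the four resulting pieces term by term. Two structural facts do most of the work: first, $w$ is even and, by (\ref{expan}), $P_{n}$ has the parity of $(-1)^{n}$, so $P_{n}^{2}(y)w(y)$ is an even function of $y$ while $P_{n}(y)P_{n-1}(y)w(y)$ is odd; second, every integral that appears converges, at $y=0$ because $w(y)=\mathrm{e}^{-y^{2}-t/y^{2}}$ vanishes there to infinite order, and at $y=\pm\infty$ because of the Gaussian factor. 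Since the integral of an odd function over $(-\infty,\infty)$ is zero, most of the terms drop out immediately.

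For $A_{n}(z)$ the $y^{-3}$ and $y^{-1}$ contributions have odd integrands and vanish, the constant $2$ contributes $2$ by (\ref{or}), and the $y^{-2}$ contribution leaves $\frac{1}{z^{2}}\cdot\frac{2t}{h_{n}}\int_{-\infty}^{\infty}y^{-2}P_{n}^{2}(y)w(y)\,dy$, which I would simply name $R_{n}(t)/z^{2}$; this is (\ref{anz}). For $B_{n}(z)$ the constant term vanishes by orthogonality (\ref{or}), the $y^{-2}$ contribution has an odd integrand and vanishes, the $y^{-3}$ contribution survives and is named $r_{n}(t)/z$, and the $y^{-1}$ contribution survives with coefficient $\frac{2t}{h_{n-1}}\int_{-\infty}^{\infty}y^{-1}P_{n}(y)P_{n-1}(y)w(y)\,dy$.

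The one step that requires more than parity bookkeeping is the evaluation of this last coefficient, and I would handle it by splitting on the parity of $n$. If $n$ is odd then $P_{n}(0)=0$, so $P_{n}(y)/y$ is again a polynomial, monic of degree $n-1$; hence $P_{n}(y)/y-P_{n-1}(y)$ has degree at most $n-2$ and is orthogonal to $P_{n-1}$, giving $\int_{-\infty}^{\infty}y^{-1}P_{n}P_{n-1}w\,dy=\int_{-\infty}^{\infty}\frac{P_{n}(y)}{y}P_{n-1}(y)w(y)\,dy=h_{n-1}$. If $n$ is even then $P_{n-1}(0)=0$, so $P_{n-1}(y)/y$ is a polynomial of degree $n-2<n$ and $\int_{-\infty}^{\infty}y^{-1}P_{n}P_{n-1}w\,dy=\int_{-\infty}^{\infty}P_{n}(y)\frac{P_{n-1}(y)}{y}w(y)\,dy=0$ by orthogonality. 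Consequently the $z^{-3}$ coefficient of $B_{n}(z)$ is $2t$ when $n$ is odd and $0$ when $n$ is even, i.e.\ $(1-(-1)^{n})t$, which completes (\ref{bnz}). I expect this parity-dependent computation of the $y^{-1}$ moment --- a one-line application of orthogonality once one notices that $P_{n}/y$ (resp.\ $P_{n-1}/y$) is still a polynomial --- to be the main point; the remaining cancellations are forced entirely by the even/odd structure of the integrands.
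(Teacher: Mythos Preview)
Your proof is correct and follows essentially the same route as the paper: insert the divided-difference formula, use parity of the integrands to kill the odd pieces, and evaluate the surviving $y^{-1}$ moment in $B_{n}$ via orthogonality after recognizing that $P_{n}(y)/y$ (for $n$ odd) or $P_{n-1}(y)/y$ (for $n$ even) is again a polynomial. The paper states that last computation more tersely (``From (\ref{or}) and (\ref{expan})''), but your explicit parity split is exactly the intended argument.
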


\begin{proof}
From the definition of $A_{n}(z)$, we find
\bea
A_{n}(z)&=&\frac{1}{h_{n}}\int_{-\infty}^{\infty}\left(2+\frac{2t}{zy^3}+\frac{2t}{z^2y^2}+\frac{2t}{z^3y}\right)P_{n}^{2}(y)w(y)dy\nonumber\\
&=&2+\frac{2t}{z^2h_{n}}\int_{-\infty}^{\infty}\frac{1}{y^2} P_{n}^{2}(y)w(y)dy\nonumber\\
&=&2+\frac{R_{n}(t)}{z^2},\nonumber
\eea
where we have used the parity of the integrand to get the result.\\
Similarly, we have
\bea
B_{n}(z)&=&\frac{1}{h_{n-1}}\int_{-\infty}^{\infty}\left(2+\frac{2t}{zy^3}+\frac{2t}{z^2y^2}+\frac{2t}{z^3y}\right)P_{n}(y)P_{n-1}(y)w(y)dy\nonumber\\
&=&\frac{2t}{zh_{n-1}}\int_{-\infty}^{\infty}\frac{1}{y^3} P_{n}(y)P_{n-1}(y)w(y)dy+\frac{2t}{z^3h_{n-1}}\int_{-\infty}^{\infty}\frac{1}{y} P_{n}(y)P_{n-1}(y)w(y)dy.\nonumber
\eea
From (\ref{or}) and (\ref{expan}), it follows that,
$$
\frac{1}{h_{n-1}}\int_{-\infty}^{\infty}\frac{1}{y} P_{n}(y)P_{n-1}(y)w(y)dy=\left\{
\begin{aligned}
&0,&n=0,2,4,\ldots,\\
&1,&n=1,3,5,\ldots.
\end{aligned}
\right.
$$
Hence, we find
\bea
B_{n}(z)
&=&\frac{2t}{zh_{n-1}}\int_{-\infty}^{\infty}\frac{1}{y^3} P_{n}(y)P_{n-1}(y)w(y)dy+\frac{(1-(-1)^n)t}{z^3}\nonumber\\
&=&\frac{r_{n}(t)}{z}+\frac{(1-(-1)^n)t}{z^3}.\nonumber
\eea
This completes the proof.
\end{proof}

Substituting (\ref{anz}) and (\ref{bnz}) into ($S_{1}$), we find
\be\label{s1}
R_{n}(t)=r_{n+1}(t)+r_{n}(t).
\ee
It follows that
\be\label{s11}
\sum_{j=0}^{n-1}R_{j}(t)=r_{n}(t)+2\sum_{j=0}^{n-1}r_{j}(t).
\ee
Similarly, from ($S_{2}'$), we obtain the following equalities:
\be\label{s21}
\beta_{n}=\frac{n+r_{n}(t)}{2},
\ee
\be\label{s22}
-2(-1)^n t\: r_{n}(t)=\beta_{n}R_{n}(t)R_{n-1}(t),
\ee
\be\label{s23}
r_{n}^2(t)+2(1-(-1)^n)t+\sum_{j=0}^{n-1}R_{j}(t)=2\beta_{n}R_{n-1}(t)+2\beta_{n}R_{n}(t).
\ee
By using (\ref{sum}), (\ref{s11}) and (\ref{s21}), we find
\be\label{s24}
4\mathrm{p}(n,t)=-4\sum_{j=0}^{n-1}\beta_{j}=-n(n-1)-2\sum_{j=0}^{n-1}r_{j}(t)=r_{n}(t)-\sum_{j=0}^{n-1}R_{j}(t)-n(n-1).
\ee
Eliminating $\sum_{j=0}^{n-1}R_{j}(t)$ from (\ref{s23}) and (\ref{s24}) gives
\be\label{s25}
4\mathrm{p}(n,t)=r_{n}(t)+r_{n}^2(t)+2(1-(-1)^n)t+\frac{4(-1)^n tr_{n}(t)}{R_{n}(t)}-(n+r_{n}(t))R_{n}(t)-n(n-1),
\ee
where we have made use of (\ref{s21}) and (\ref{s22}).

\noindent $\mathbf{Remark.}$ The equation (\ref{s23}) expresses the finite sum $\sum_{j=0}^{n-1}R_{j}(t)$, ultimately the $\sigma$-function as a rational function of $r_n(t)$ and $R_{n}(t)$. Without which one tends to find the third order differential equation, which is at first sight not the Painlev\'{e} equations.

By using \eqref{s1}, \eqref{s21}, \eqref{s22} and \eqref{s23}, we obtain the following results.
\begin{theorem}
The auxiliary quantities $r_{n}(t)$ and $R_{n}(t)$ satisfy the following non-linear second order difference equations:
\bea\label{rndiff}
-4(-1)^n t\: r_{n}(t)=\left(n+r_{n}(t)\right)\left(r_{n+1}(t)+r_{n}(t)\right)\left(r_{n}(t)+r_{n-1}(t)\right),
\eea
with the initial conditions
\[r_0(t)=0,\qquad r_1(t)=\frac{2t\int_{-\infty}^{\infty}\frac{1}{y^2}\mathrm{e}^{-y^2-\frac{t}{y^2}}dy}{\int_{-\infty}^{\infty}\mathrm{e}^{-y^2-\frac{t}{y^2}}dy},\]
and
\begin{equation}\label{Rndiff}
\begin{aligned}
&4(-1)^n t\left[(n+1)R_{n+1}(t)-nR_{n-1}(t)\right]+(2n+1)R_{n+1}(t)R_{n}(t)R_{n-1}(t)\\
&=\left[4(-1)^n t-R_{n+1}(t)R_{n}(t)\right]\left[4(-1)^n t+R_{n}(t)R_{n-1}(t)\right],
\end{aligned}
\end{equation}
with the initial conditions
\[R_0(t)=\frac{2t\int_{-\infty}^{\infty}\frac{1}{y^2}\mathrm{e}^{-y^2-\frac{t}{y^2}}dy}{\int_{-\infty}^{\infty}\mathrm{e}^{-y^2-\frac{t}{y^2}}dy},\qquad R_1(t)=\frac{2t\int_{-\infty}^{\infty}\mathrm{e}^{-y^2-\frac{t}{y^2}}dy}{\int_{-\infty}^{\infty}y^2\mathrm{e}^{-y^2-\frac{t}{y^2}}dy}.\]
\end{theorem}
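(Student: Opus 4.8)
The plan is to obtain both difference equations as purely algebraic consequences of the compatibility relations \eqref{s1}, \eqref{s21} and \eqref{s22} (with \eqref{s23} available as a consistency cross-check), and to read off the initial data by evaluating the defining integrals of $r_n(t)$ and $R_n(t)$ at $n=0$ and $n=1$. The common starting point is to eliminate $\beta_n$: inserting \eqref{s21} into \eqref{s22} gives the single relation
\[
-4(-1)^n t\, r_n(t)=\bigl(n+r_n(t)\bigr)R_n(t)R_{n-1}(t),
\]
which links $r_n$ with $R_n$ and $R_{n-1}$.

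For \eqref{rndiff} I would then use \eqref{s1} in the forms $R_n(t)=r_{n+1}(t)+r_n(t)$ and $R_{n-1}(t)=r_n(t)+r_{n-1}(t)$ to substitute for $R_n$ and $R_{n-1}$ on the right-hand side of the displayed relation; this reproduces \eqref{rndiff} at once. For the initial data, $r_0(t)=0$ because the integral defining $r_n(t)$ involves $P_{-1}\equiv 0$ (equivalently $\beta_0=0$ in \eqref{s21}), and $r_1(t)$ follows from $P_0(x)=1$ together with $P_1(x)=x$ (the latter forced by the evenness of $w$), which gives $r_1(t)=\frac{2t}{h_0}\int_{-\infty}^{\infty}y^{-2}w(y)\,dy$ with $h_0=\int_{-\infty}^{\infty}w(y)\,dy$, the stated value.

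For \eqref{Rndiff} I would instead solve the displayed relation for $r_n(t)$,
\[
r_n(t)=\frac{-n\,R_n(t)R_{n-1}(t)}{4(-1)^n t+R_n(t)R_{n-1}(t)},
\]
and, after the shift $n\mapsto n+1$ together with $(-1)^{n+1}=-(-1)^n$,
\[
r_{n+1}(t)=\frac{(n+1)\,R_{n+1}(t)R_n(t)}{4(-1)^n t-R_{n+1}(t)R_n(t)}.
\]
Dividing \eqref{s1} (that is, $r_{n+1}(t)+r_n(t)=R_n(t)$) by $R_n(t)$ and clearing the two denominators then yields exactly \eqref{Rndiff}. Alternatively one may take the difference of \eqref{s23} at the indices $n+1$ and $n$, so that $\sum_{j=0}^{n-1}R_j$ telescopes to $R_n$ and the constant term produces $4(-1)^n t$, and eliminate the remaining $r$'s with \eqref{s1} and the displayed relation; this is a more laborious path to the same identity. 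The initial data are read off directly once more: $R_0(t)=\frac{2t}{h_0}\int_{-\infty}^{\infty}y^{-2}w(y)\,dy$ (equal to $r_1(t)$, consistently with \eqref{s1}), and $R_1(t)=\frac{2t}{h_1}\int_{-\infty}^{\infty}w(y)\,dy$ with $h_1=\int_{-\infty}^{\infty}y^2w(y)\,dy$, using $P_1(x)=x$.

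I do not expect a genuine obstacle; the argument is essentially bookkeeping. The two points that need care are the sign management with $(-1)^n$ under the index shift, and the clearing of denominators in the derivation of \eqref{Rndiff}: one should check that a common factor $R_n(t)$ cancels, so that \eqref{Rndiff} emerges in precisely the stated polynomial form, and note that, although solving for $r_n(t)$ formally assumed $4(-1)^n t+R_n(t)R_{n-1}(t)\neq 0$ and the division assumed $R_n(t)\neq 0$, the resulting polynomial identities hold for all $t>0$ by analyticity.
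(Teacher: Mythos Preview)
Your proposal is correct and follows essentially the same route as the paper: combine \eqref{s21} with \eqref{s22} to get $-4(-1)^n t\,r_n=(n+r_n)R_nR_{n-1}$, then feed \eqref{s1} into this (in the two directions you indicate) to obtain \eqref{rndiff} and \eqref{Rndiff}. The paper's proof omits the explicit verification of the initial conditions and the sign/denominator bookkeeping that you spell out, but the underlying argument is identical.
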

\begin{proof}
Substituting (\ref{s1}) and (\ref{s21}) into (\ref{s22}), we obtain \eqref{rndiff}. To continue, we
insert \eqref{s21} into \eqref{s22} and get
$$
-4(-1)^n t\: r_{n}(t)=\left(n+r_{n}(t)\right)R_{n}(t)R_{n-1}(t).
$$
Solving for $r_n(t)$ from the above yields
\be\label{sol}
r_{n}(t)=-\frac{nR_{n}(t)R_{n-1}(t)}{4(-1)^n t+R_{n}(t)R_{n-1}(t)}.
\ee
Plugging it into (\ref{s1}) gives rise to \eqref{Rndiff}.
\end{proof}
Let
\be\label{sigma}
\sigma_{n}(t):=-\sum_{j=0}^{n-1}R_{j}(t).
\ee
We will find in the next section that $\sigma_{n}(t)$ is a quantity related to the Hankel determinant $D_{n}(t)$, namely,
$$
\sigma_{n}(t)=2t\frac{d}{dt}\ln D_{n}(t).
$$
\begin{theorem} The quantity $\sigma_n(t)$ satisfies a non-linear second order difference equation
\bea
&&4(-1)^n nt(\sigma_{n+1}(t)-\sigma_{n-1}(t))\left[4(-1)^n t+(\sigma_{n-1}(t)-\sigma_{n}(t))(\sigma_{n}(t)-\sigma_{n+1}(t))\right]\nonumber\\
&+&n^2(\sigma_{n-1}(t)-\sigma_{n}(t))^2(\sigma_{n}(t)-\sigma_{n+1}(t))^2\nonumber\\
&=&\left[\sigma_{n}(t)-2(1-(-1)^n)t\right]\left[4(-1)^n t+(\sigma_{n-1}(t)-\sigma_{n}(t))(\sigma_{n}(t)-\sigma_{n+1}(t))\right]^2\nonumber
\eea
with the initial conditions
\[\sigma_1(t)=-\frac{2t\int_{-\infty}^{\infty}\frac{1}{y^2}\mathrm{e}^{-y^2-\frac{t}{y^2}}dy}{\int_{-\infty}^{\infty}\mathrm{e}^{-y^2-\frac{t}{y^2}}dy},\qquad \sigma_2(t)=-\frac{2t\int_{-\infty}^{\infty}\frac{1}{y^2}\mathrm{e}^{-y^2-\frac{t}{y^2}}dy}{\int_{-\infty}^{\infty}\mathrm{e}^{-y^2-\frac{t}{y^2}}dy}
-\frac{2t\int_{-\infty}^{\infty}\mathrm{e}^{-y^2-\frac{t}{y^2}}dy}{\int_{-\infty}^{\infty}y^2\mathrm{e}^{-y^2-\frac{t}{y^2}}dy}.\]
\end{theorem}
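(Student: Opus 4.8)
The plan is to obtain the asserted difference equation by a purely algebraic manipulation of the relations \eqref{s21}, \eqref{s22}, \eqref{s23} and \eqref{sol} already derived in this section, rewritten entirely in terms of $\sigma_n(t)$, and then to read off the two initial conditions from the explicit formulas for $R_0(t)$ and $R_1(t)$.

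\textbf{Step 1: express the $R$-quantities through $\sigma_n$.} From the definition \eqref{sigma} one has immediately $R_n(t)=\sigma_n(t)-\sigma_{n+1}(t)$ and $R_{n-1}(t)=\sigma_{n-1}(t)-\sigma_n(t)$, hence $\sum_{j=0}^{n-1}R_j(t)=-\sigma_n(t)$, $R_n(t)R_{n-1}(t)=(\sigma_{n-1}(t)-\sigma_n(t))(\sigma_n(t)-\sigma_{n+1}(t))$ and $R_n(t)+R_{n-1}(t)=\sigma_{n-1}(t)-\sigma_{n+1}(t)$. Thus every $R$-quantity appearing below becomes an explicit expression in $\sigma_{n-1},\sigma_n,\sigma_{n+1}$.

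\textbf{Step 2: eliminate $r_n$ and $\beta_n$.} I would take \eqref{s23} as the starting point — as the Remark after \eqref{s25} stresses, this is exactly the identity that presents the finite sum $\sum_{j=0}^{n-1}R_j(t)$, i.e.\ the $\sigma$-function, as a \emph{rational} function of $r_n(t)$ and $R_n(t)$, so that no third-order relation is incurred. Using \eqref{s21} to replace $2\beta_n$ by $n+r_n(t)$, equation \eqref{s23} becomes
\[
r_n^2(t)+2(1-(-1)^n)t-\sigma_n(t)=\bigl(n+r_n(t)\bigr)\bigl(R_{n-1}(t)+R_n(t)\bigr).
\]
Next I would insert \eqref{sol} (itself a consequence of \eqref{s21} and \eqref{s22}); the crucial point is the resulting simplification
\[
n+r_n(t)=\frac{4(-1)^n nt}{4(-1)^n t+R_n(t)R_{n-1}(t)},\qquad r_n^2(t)=\frac{n^2\bigl(R_n(t)R_{n-1}(t)\bigr)^2}{\bigl(4(-1)^n t+R_n(t)R_{n-1}(t)\bigr)^2}.
\]

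\textbf{Step 3: clear denominators, and the initial conditions.} Substituting both expressions into the displayed relation and multiplying through by $\bigl(4(-1)^n t+R_n(t)R_{n-1}(t)\bigr)^2$ produces
\begin{align*}
n^2\bigl(R_n R_{n-1}\bigr)^2 & -\bigl[\sigma_n-2(1-(-1)^n)t\bigr]\bigl(4(-1)^n t+R_n R_{n-1}\bigr)^2 \\
& = 4(-1)^n nt\,(R_{n-1}+R_n)\bigl(4(-1)^n t+R_n R_{n-1}\bigr).
\end{align*}
Substituting $R_nR_{n-1}=(\sigma_{n-1}-\sigma_n)(\sigma_n-\sigma_{n+1})$ and $R_{n-1}+R_n=-(\sigma_{n+1}-\sigma_{n-1})$ and moving the right-hand term over then gives precisely the claimed second order difference equation. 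For the initial data, $\sigma_1(t)=-R_0(t)$ and $\sigma_2(t)=-R_0(t)-R_1(t)$, so the stated formulas follow from the explicit $R_0(t)$ and $R_1(t)$ recorded in the previous theorem; these in turn come from $P_0(x)=1$ and, the weight being even and $P_1$ monic of degree one, $P_1(x)=x$, so that $h_0=\int_{-\infty}^{\infty}w(y,t)\,dy$ and $h_1=\int_{-\infty}^{\infty}y^2w(y,t)\,dy$.

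\textbf{Expected obstacle.} There is no genuine analytic difficulty here; once \eqref{s23} is available the proof is bookkeeping. The one point requiring care is organizational — one must use \eqref{s23} \emph{directly} to remove $\sum_{j=0}^{n-1}R_j(t)$ rather than, say, differentiating or index-shifting (which would raise the order) — together with propagating the $(-1)^n$ factors consistently; the most error-prone part is matching the cleared-denominator identity to the factored form displayed in the statement.
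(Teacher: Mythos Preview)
Your proposal is correct and follows essentially the same route as the paper: both proofs use the definition $R_n=\sigma_n-\sigma_{n+1}$, substitute \eqref{sol} for $r_n$, and feed the result into \eqref{s23} (rewritten via \eqref{s21}) to obtain the closed second order relation in $\sigma_{n-1},\sigma_n,\sigma_{n+1}$; the initial data are likewise read off from $\sigma_1=-R_0$, $\sigma_2=-R_0-R_1$. The only cosmetic difference is that the paper uses \eqref{s22} to rewrite the right side of \eqref{s23} as $(n+r_n)R_n-\dfrac{4(-1)^n t\,r_n}{R_n}$ before inserting $r_n$ (their equation \eqref{s231}), whereas you keep it as $(n+r_n)(R_{n-1}+R_n)$; your version is in fact a shade tidier since the factor $n+r_n$ is common and the denominator clearing is immediate.
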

\begin{proof}
From the definition of $\sigma_n(t)$, we easily see that
\be\label{rs}
R_{n}(t)=\sigma_{n}(t)-\sigma_{n+1}(t).
\ee
Substituting it into (\ref{sol}), we find
\be\label{rs1}
r_{n}(t)=-\frac{n(\sigma_{n}(t)-\sigma_{n+1}(t))(\sigma_{n-1}(t)-\sigma_{n}(t))}{4(-1)^n t+(\sigma_{n}(t)-\sigma_{n+1}(t))(\sigma_{n-1}(t)-\sigma_{n}(t))}.
\ee
By using (\ref{s21}) and (\ref{s22}), (\ref{s23}) becomes
\be\label{s231}
r_{n}^2(t)+2(1-(-1)^n)t-\sigma_{n}(t)=(n+r_{n}(t))R_{n}(t)-\frac{4(-1)^n t\: r_{n}(t)}{R_{n}(t)}.
\ee
Inserting (\ref{rs}) and (\ref{rs1}) into (\ref{s231}), we arrive at the desired difference equation.
\end{proof}

\section{Painlev\'{e} III$'$ and Its $\sigma$-form}
In this section, we take $t$-derivative of the orthogonality relations satisfied by $P_n(x,t)$. With the aid of the identities in the last section, we shall derive the second order ordinary differential equations satisfied by  $R_{n}(t)$, $r_{n}(t)$ and $\sigma_{n}(t)$, where $n$, the order of the Hankel matrix appears as a parameter.

We first take a derivative with respect to $t$ in the following equality
$$
\int_{-\infty}^{\infty}P_{n}^2(x,t)\mathrm{e}^{-x^{2}-\frac{t}{x^{2}}}dx=h_{n}(t),\;\;n=0,1,2,\ldots,
$$
and get
\be\label{eq1}
2t \frac{d}{dt}\ln h_{n}(t)=-R_{n}(t).
\ee
Hence, from (\ref{sigma}) and (\ref{hankel}), it follows that
$$
\sigma_{n}(t)=-\sum_{j=0}^{n-1}R_{j}(t)=2t\frac{d}{dt}\ln D_{n}(t).
$$
Using the fact that $\beta_{n}=\frac{h_{n}}{h_{n-1}}$, we have
$$
2t\frac{d}{dt}\ln\beta_{n}(t)=R_{n-1}(t)-R_{n}(t),
$$
that is,
\be\label{eq2}
2t\beta_{n}'(t)=\beta_{n}R_{n-1}(t)-\beta_{n}R_{n}(t).
\ee

On the other hand, differentiating the orthogonality relation
$$
\int_{-\infty}^{\infty}P_{n}(x,t)P_{n-2}(x,t)\mathrm{e}^{-x^{2}-\frac{t}{x^{2}}}dx=0,\;\;n=1,2,\ldots
$$
over $t$, we find
\be\label{pn}
\frac{d}{dt}\mathrm{p}(n,t)=\frac{1}{h_{n-2}}\int_{-\infty}^{\infty}\frac{1}{x^2}P_{n}(x,t)P_{n-2}(x,t)\mathrm{e}^{-x^{2}-\frac{t}{x^{2}}}dx.
\ee
Replacing $n$ by $n-1$ in (\ref{rr}), we have
\be\label{rr1}
\frac{P_{n-2}(x,t)}{h_{n-2}}=\frac{xP_{n-1}(x,t)}{h_{n-1}}-\frac{P_{n}(x,t)}{h_{n-1}}.
\ee
Substituting (\ref{rr1}) into (\ref{pn}), we obtain
\be\label{pnt}
2t\frac{d}{dt}\mathrm{p}(n,t)=(1-(-1)^n)t-\beta_{n}R_{n}(t).
\ee

Now we are ready to obtain the coupled differential equations satisfied by $r_n(t)$ and $R_n(t)$.
\begin{lemma} $r_n(t)$ satisfies a first order linear ordinary differential equation
\be\label{ricca1}
r_{n}'(t)=-\frac{2(-1)^n r_{n}(t)}{R_{n}(t)}-\frac{n+r_{n}(t)}{2t}R_{n}(t),
\ee
and $R_n(t)$ satisfies the Riccati equation
\be\label{ricca2}
2tR_{n}'(t)=R_{n}^2(t)+(1-2r_{n}(t))R_{n}(t)-4(-1)^n t.
\ee
\end{lemma}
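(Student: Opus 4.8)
The plan is to convert the $t$-derivative identities \eqref{eq1}, \eqref{eq2} and \eqref{pnt} established above into closed relations by feeding in the purely algebraic consequences \eqref{s1}, \eqref{s21} and \eqref{s22} of the ladder relations; no new integral quantity will be needed.

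First I would derive \eqref{ricca1} from \eqref{eq2}, which reads $2t\beta_n'(t)=\beta_n R_{n-1}(t)-\beta_n R_n(t)$. Differentiating \eqref{s21} gives $\beta_n'(t)=\tfrac12 r_n'(t)$, so the left-hand side becomes $t\,r_n'(t)$. On the right I remove $R_{n-1}(t)$ by rewriting \eqref{s22} as $\beta_n R_{n-1}(t)=-2(-1)^n t\,r_n(t)/R_n(t)$, and I replace the $\beta_n$ in the surviving term by $(n+r_n(t))/2$ using \eqref{s21} again. Dividing through by $t$ then gives \eqref{ricca1} at once.

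Next, for \eqref{ricca2} I would use \eqref{s1} in the form $R_n(t)=r_n(t)+r_{n+1}(t)$, so that $R_n'(t)=r_n'(t)+r_{n+1}'(t)$, and express each summand through $2t\,r_m'(t)=(m+r_m(t))\bigl(R_{m-1}(t)-R_m(t)\bigr)$, which is \eqref{eq2} together with \eqref{s21}, taken at $m=n$ and at $m=n+1$. In the $m=n$ term I eliminate $R_{n-1}(t)$ via \eqref{s22}; in the $m=n+1$ term I eliminate $R_{n+1}(t)$ via \eqref{s22} written at index $n+1$, which carries the sign $(-1)^{n+1}=-(-1)^n$. Adding the two, the pieces proportional to $1/R_n(t)$ combine into $-4(-1)^n t\,(r_n(t)+r_{n+1}(t))/R_n(t)$, which equals $-4(-1)^n t$ by \eqref{s1}, while the remaining terms collapse to $R_n(t)\bigl(1+r_{n+1}(t)-r_n(t)\bigr)$; substituting $r_{n+1}(t)=R_n(t)-r_n(t)$ once more produces exactly $2tR_n'(t)=R_n^2(t)+(1-2r_n(t))R_n(t)-4(-1)^n t$.

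This is essentially bookkeeping, so I do not expect a genuine obstacle; the one place to be careful is the sign flip $(-1)^{n+1}=-(-1)^n$ and the use of \eqref{s22} at the shifted index $n+1$, since a slip there would spoil the cancellation that yields the constant term $-4(-1)^n t$ in \eqref{ricca2}. A slightly longer alternative for \eqref{ricca2} would be to differentiate the rational identity \eqref{s25} for $\mathrm{p}(n,t)$ with respect to $t$, substitute $2t\,\tfrac{d}{dt}\mathrm{p}(n,t)$ from \eqref{pnt} and $r_n'(t)$ from \eqref{ricca1}, and then solve the resulting equation, which is linear in $R_n'(t)$, for $R_n'(t)$; I would keep the shorter route above as the main argument.
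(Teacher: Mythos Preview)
Your derivation of \eqref{ricca1} is exactly the paper's: use \eqref{s21} to turn $\beta_n'$ into $\tfrac12 r_n'$, replace $\beta_n R_{n-1}$ via \eqref{s22}, and substitute $\beta_n=(n+r_n)/2$ in the remaining term.

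For \eqref{ricca2} your main route is correct but genuinely different from the paper's. The paper proceeds through $\mathrm{p}(n,t)$: it substitutes \eqref{s25} into \eqref{pnt}, replaces $\beta_n$ by $(n+r_n)/2$, and then uses the already-proved \eqref{ricca1} to eliminate $r_n'(t)$, leaving an equation linear in $R_n'(t)$ --- precisely the alternative you sketch at the end. Your primary argument instead differentiates \eqref{s1}, applies \eqref{eq2} at both $n$ and $n+1$, and closes the system with \eqref{s22} at the shifted index. This is shorter and avoids $\mathrm{p}(n,t)$ altogether; the only cost is that it requires the relations \eqref{eq2} and \eqref{s22} at two consecutive indices rather than just at $n$, whereas the paper's route stays at a single index but needs the more elaborate identity \eqref{s25}. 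Both arguments are sound; your version is arguably cleaner bookkeeping, and your caution about the sign flip $(-1)^{n+1}=-(-1)^n$ is exactly the right place to be careful.
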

\begin{proof}
According to (\ref{s21}) and (\ref{s22}), we first replace $\beta_{n}R_{n-1}(t)$ by $-2(-1)^n t\: r_{n}(t)/R_n(t)$ in \eqref{eq2} and then substitute $(n+r_n(t))/2$ for $\beta_n(t)$. Finally, we obtain \eqref{ricca1}.
To derive \eqref{ricca2}, we plug (\ref{s25}) into (\ref{pnt}) and replace $\beta_n(t)$ by $(n+r_n(t))/2$. By using (\ref{ricca1}) to eliminate $r_{n}'(t)$ in the resulting equation, we obtain the desired result.
\end{proof}
\noindent $\mathbf{Remark.}$ One finds in other problems, $r_n$ satisfied a Riccati equation \cite{Basor2009,Basor2010,ChenIts,Han,Min2018,Min}, whereas in this problem $r_n(t)$ satisfied a linear differential equation.

\begin{theorem}
$R_{n}(t)$ satisfies the following non-linear second order differential equation
\be\label{pain3}
R_{n}''(t)=\frac{(R_{n}'(t))^2}{R_{n}(t)}-\frac{R_{n}'(t)}{t}+\frac{(2n+1)R_{n}^2(t)}{4t^2}-\frac{(-1)^n}{t}+\frac{R_{n}^3(t)}{4t^2}-\frac{4}{R_{n}(t)},
\ee
which is a particular Painlev\'{e} III$'$, i.e. $P_{\mathrm{III'}}(2n+1,-4(-1)^n,1,-16)$, following the convention of \cite{Ohyama}. The non-linear second order differential equation for $r_{n}(t)$ reads
\bea\label{diff}
&&\Big[2 t^2 r_n'(t) r_n''(t)+2 t r_n(t) (r_n'(t))^2-8 (-1)^n t r_n(t) r_n'(t)+t (r_n'(t))^2-4 (-1)^n n t r_n'(t)\nonumber\\
&-&8 (-1)^n r_n^3(t)-8 (-1)^n n r_n^2(t)\Big]^2=t \Big[t (r_n'(t))^2-4 (-1)^n r_n^2(t)-4 (-1)^n n r_n(t)\Big]\nonumber\\
&\cdot&\Big[2 t r_n''(t)+2 r_n(t) r_n'(t)+r_n'(t)-8 (-1)^n r_n(t)-4 (-1)^n n\Big]^2.
\eea
\end{theorem}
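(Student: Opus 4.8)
The plan is to derive both ODEs purely algebraically from the coupled first-order system in Lemma~3.4, namely the linear equation \eqref{ricca1} for $r_n'(t)$ and the Riccati equation \eqref{ricca2} for $R_n'(t)$. The key observation is that these two equations, together with the algebraic relation \eqref{sol} expressing $r_n(t)$ rationally in terms of $R_n(t)$ and $R_{n-1}(t)$ --- or more usefully, the relation obtained from \eqref{s22} and \eqref{s21}, $-4(-1)^nt\,r_n(t)=(n+r_n(t))R_n(t)R_{n-1}(t)$ --- allow us to eliminate $R_{n-1}(t)$ and reduce everything to a closed system in $r_n(t)$ and $R_n(t)$ alone. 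That reduction is exactly what Lemma~3.4 already accomplishes, so the remaining task is a one-variable elimination.

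For \eqref{pain3}: first solve the Riccati equation \eqref{ricca2} for $r_n(t)$ in terms of $R_n(t)$ and $R_n'(t)$, obtaining
\be\nonumber
2r_n(t)=1+\frac{R_n(t)}{2t}-\frac{R_n'(t)}{R_n(t)}\cdot\frac{2t}{R_n(t)}+\cdots,
\ee
that is, $r_n(t)=\frac{1}{2}\big(R_n(t)+1-4(-1)^nt/R_n(t)-2tR_n'(t)/R_n(t)\big)$. Then differentiate this expression to get $r_n'(t)$ in terms of $R_n(t)$, $R_n'(t)$, $R_n''(t)$, and substitute both $r_n(t)$ and $r_n'(t)$ into the linear equation \eqref{ricca1}. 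After clearing denominators (the natural common denominator being $t^2 R_n(t)$), every occurrence of $r_n$ disappears and one is left with a second-order ODE in $R_n$; collecting terms and dividing through should produce precisely \eqref{pain3}. One then matches the resulting equation against the standard $P_{\mathrm{III'}}$ form to read off the parameters $(2n+1,-4(-1)^n,1,-16)$ in the convention of \cite{Ohyama}; this is a direct comparison of coefficients.

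For \eqref{diff}: here the roles are reversed. From \eqref{ricca2}, $R_n(t)$ satisfies a quadratic, so rather than solving for $R_n$ by radicals it is cleaner to treat \eqref{ricca1} and \eqref{ricca2} as giving two independent expressions and eliminate $R_n(t)$ by resultants. Concretely, rewrite \eqref{ricca1} as a linear relation in $R_n(t)$ with coefficients polynomial in $r_n$, $r_n'$, $t$, namely $2t r_n'(t)R_n(t)=-4(-1)^nt\,r_n(t)-(n+r_n(t))R_n^2(t)$, which is itself a quadratic in $R_n(t)$; together with the quadratic \eqref{ricca2} (after substituting $R_n'$ is not yet needed), we have two quadratics in $R_n(t)$ and take their resultant. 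This yields a polynomial relation among $r_n$, $r_n'$, $R_n'$ and $t$; but $R_n'$ still appears, so differentiate \eqref{ricca1}-turned-algebraic or use \eqref{ricca2} once more to express $R_n'$, and finally a single further elimination removes $R_n$ entirely, leaving the quartic-looking identity \eqref{diff} (the squares on both sides are the tell-tale sign of having eliminated a square root / resolved a quadratic). The bookkeeping here is heavier, so in practice one organizes it as: (i) from \eqref{ricca2} write $R_n^2 = 2tR_n' -(1-2r_n)R_n+4(-1)^nt$; (ii) from \eqref{ricca1} write $(n+r_n)R_n^2 = -2t r_n' R_n - 4(-1)^n t\, r_n$; (iii) substitute (i) into (ii) to get a \emph{linear} equation for $R_n$ in terms of $R_n'$, $r_n$, $r_n'$, $t$, hence solve $R_n = N/D$ rationally; (iv) substitute this back into (i), clear denominators, and differentiate once to kill the remaining $R_n'$ --- or better, substitute $R_n=N/D$ into \eqref{ricca2} directly and clear denominators. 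Squaring arises when one clears the denominator $D$ which contains $R_n'$; tracking that carefully gives \eqref{diff}.

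The main obstacle is the second elimination: the algebra producing \eqref{diff} is genuinely lengthy because $R_n$ enters \eqref{ricca1} quadratically, so a naive resultant computation balloons in size. The efficient route is the substitution-and-differentiate strategy above --- exploiting that \eqref{ricca1} is \emph{linear} in $r_n'$ and that \eqref{ricca2} lets one replace $R_n^2$ by a linear expression in $R_n$ --- which keeps the degree in $R_n$ under control and makes the final clearing of denominators tractable (if somewhat tedious) by hand or by a short computer-algebra check. By contrast \eqref{pain3} is straightforward once \eqref{ricca2} is solved for $r_n$. Finally, one should verify the identification with $P_{\mathrm{III'}}$ is consistent with the initial data $R_n(0)=0$ and the known small-$t$ behavior, though that is a sanity check rather than part of the proof proper.
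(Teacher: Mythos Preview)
Your plan is correct and, for \eqref{pain3}, identical to the paper's: solve \eqref{ricca2} for $r_n(t)$ in terms of $R_n,R_n'$, then substitute into \eqref{ricca1}; this is exactly what the paper does in one sentence.

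For \eqref{diff} your elimination strategy also works, but it is more elaborate than what the paper actually does. The paper simply treats \eqref{ricca1} as a quadratic in $R_n(t)$ (after clearing the denominator $R_n$, one has $(n+r_n)R_n^2+2t r_n' R_n+4(-1)^n t r_n=0$), solves it explicitly for $R_n$ --- radical and all --- and substitutes that solution (together with its $t$-derivative) into \eqref{ricca2}; the squared form of \eqref{diff} then falls out when one isolates and squares the radical. Your resultant/substitute-and-reduce scheme (i)--(iv) reaches the same endpoint but introduces $R_n'$ as an auxiliary unknown that must be eliminated later, and step (iv) as written is a bit tangled (differentiating does not ``kill'' $R_n'$, it introduces $R_n''$). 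The paper's direct route avoids this: because \eqref{ricca1} contains no $R_n'$, solving it for $R_n$ and differentiating gives $R_n'$ purely in terms of $r_n,r_n',r_n''$ and the same radical, so a single substitution into \eqref{ricca2} followed by squaring finishes the job. Either way the result is the same, but the paper's version is shorter and cleaner.
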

\begin{proof}
Solving $r_{n}(t)$ from (\ref{ricca2}) and substituting it into (\ref{ricca1}) leads to (\ref{pain3}). On the other hand, solving $R_{n}(t)$ from (\ref{ricca1})
and substituting either solution into (\ref{ricca2}), we obtain (\ref{diff}).
\end{proof}
\noindent $\mathbf{Remark.}$ Equation (\ref{diff}) may be transformed into a Chazy type equation \cite{Chazy1909,Chazy1911,Cosgrove}.

\begin{theorem}
$\sigma_{n}(t)$ satisfies the following non-linear second order differential equation
\begin{small}
\bea\label{sig}
&&\bigg\{\Big[4\sigma_{n}(t)-4(1+(-1)^n)t \sigma_{n}'(t)+(-1)^n t (\sigma_{n}'(t))^2\Big]\Big[4\sigma_{n}(t)-8t \sigma_{n}'(t)+8(1-(-1)^n)t\Big]\nonumber\\
&-&(-1)^nt\Big[2(1-(-1)^n)-\sigma_{n}'(t)-2t\sigma_{n}''(t)\Big]^2\bigg\}^2=256n^2\Big[\sigma_{n}(t)-2t \sigma_{n}'(t)+2(1-(-1)^n)t\Big]^3.
\eea
\end{small}
Suppose that $n\rightarrow\infty, t\rightarrow 0$ such that $s:=n^2 t$ is fixed and the following limit
$$
\sigma(s):=\lim_{n\rightarrow\infty}\sigma_{n}\left(\frac{s}{n^2}\right)
$$
exists. Then $\sigma(s)$ satisfies the following second order differential equation
\be\label{p3p}
4s^2 (\sigma''(s))^2+4s\sigma'(s)\sigma''(s)+8s(\sigma'(s))^3-4\sigma(s)(\sigma'(s))^2+(\sigma'(s))^2=0.
\ee
\end{theorem}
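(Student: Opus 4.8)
The strategy is to turn the coupled relations of Sections~2--3 into an explicit parametric description of $\sigma_{n}(t)$, $\sigma_{n}'(t)$, $\sigma_{n}''(t)$ through $r_{n}(t)$ and $R_{n}(t)$, and then eliminate $r_{n}$ and $R_{n}$ algebraically; specialised to the double scaling, the resulting equation will collapse to \eqref{p3p}. First I would use \eqref{s24} and \eqref{sigma} to write $4\,\mathrm{p}(n,t)=r_{n}(t)+\sigma_{n}(t)-n(n-1)$, differentiate in $t$, substitute \eqref{pnt} and $\beta_{n}=(n+r_{n})/2$, and then subtract \eqref{ricca1} to remove $r_{n}'$; this yields $2t\sigma_{n}'=4t(1-(-1)^{n})+4(-1)^{n}tr_{n}/R_{n}-(n+r_{n})R_{n}$. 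Subtracting this from \eqref{s231}, written as $\sigma_{n}=r_{n}^{2}+2(1-(-1)^{n})t-(n+r_{n})R_{n}+4(-1)^{n}tr_{n}/R_{n}$, cancels every occurrence of $R_{n}$ and leaves the ``first integral''
\[
r_{n}^{2}(t)=\sigma_{n}(t)-2t\sigma_{n}'(t)+2(1-(-1)^{n})t ,
\]
and substituting this back into \eqref{s231} gives the companion relation $(n+r_{n})R_{n}-4(-1)^{n}tr_{n}/R_{n}=2t\,[\,2(1-(-1)^{n})-\sigma_{n}'\,]$.

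Next I would differentiate the first integral and use \eqref{ricca1} for $r_{n}'$, obtaining $-4(-1)^{n}r_{n}^{2}/R_{n}-(n+r_{n})r_{n}R_{n}/t=2(1-(-1)^{n})-\sigma_{n}'-2t\sigma_{n}''$. Put $A_{1}:=2(1-(-1)^{n})-\sigma_{n}'$ and $A_{2}:=A_{1}-2t\sigma_{n}''$. Multiplying the companion relation above by $r_{n}$, the two equations become linear in the pair $(n+r_{n})r_{n}R_{n}$ and $(-1)^{n}tr_{n}^{2}/R_{n}$; solving for each and forming their product, which equals $(-1)^{n}t\,(n+r_{n})r_{n}^{3}=(-1)^{n}t\,\rho\,(n r_{n}+\rho)$ with $\rho:=r_{n}^{2}$ known from the first integral, produces $16\,n\rho\,r_{n}=-16\rho^{2}-4(-1)^{n}tA_{1}^{2}\rho+(-1)^{n}tA_{2}^{2}$. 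Squaring this and replacing $r_{n}^{2}$ by $\rho$ one last time kills $r_{n}$ altogether:
\[
256\,n^{2}\rho^{3}=\left(16\rho^{2}+4(-1)^{n}tA_{1}^{2}\rho-(-1)^{n}tA_{2}^{2}\right)^{2}.
\]
Since $4\rho=4\sigma_{n}-8t\sigma_{n}'+8(1-(-1)^{n})t$, $A_{2}=2(1-(-1)^{n})-\sigma_{n}'-2t\sigma_{n}''$, and $4\rho+(-1)^{n}tA_{1}^{2}=4\sigma_{n}-4(1+(-1)^{n})t\sigma_{n}'+(-1)^{n}t(\sigma_{n}')^{2}$, so that $16\rho^{2}+4(-1)^{n}tA_{1}^{2}\rho=4\rho\,(4\rho+(-1)^{n}tA_{1}^{2})$, this display is exactly \eqref{sig}.

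The only real difficulty in this part is clerical: carrying the parity factor $(-1)^{n}$ faithfully through every elimination, and checking at the end the identity $4\rho+(-1)^{n}tA_{1}^{2}=4\sigma_{n}-4(1+(-1)^{n})t\sigma_{n}'+(-1)^{n}t(\sigma_{n}')^{2}$, which rests on $(-1)^{n}(1-(-1)^{n})=-(1-(-1)^{n})$ and $(1-(-1)^{n})^{2}=2(1-(-1)^{n})$. Getting the cube $\rho^{3}$ and the squared bracket of \eqref{sig} to emerge with precisely the stated coefficients is where the bulk of the computation lives, but there is no conceptual obstruction.

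For the double scaling I would set $t=s/n^{2}$ and $g_{n}(s):=\sigma_{n}(s/n^{2})$, so that $\sigma_{n}(t)=g_{n}(s)$, $\sigma_{n}'(t)=n^{2}g_{n}'(s)$, $\sigma_{n}''(t)=n^{4}g_{n}''(s)$, and by hypothesis $g_{n}\to\sigma$ strongly enough to pass to the limit for $g_{n}'$ and $g_{n}''$ as well. Tracking powers of $n$ in \eqref{sig}, the dominant contributions are $t(\sigma_{n}')^{2}=s n^{2}(g_{n}')^{2}$ and $2t\sigma_{n}''=2s n^{2}g_{n}''$: the factor $[4\sigma_{n}-4(1+(-1)^{n})t\sigma_{n}'+(-1)^{n}t(\sigma_{n}')^{2}]$ equals $(-1)^{n}s n^{2}(g_{n}')^{2}+O(1)$, the factor $[4\sigma_{n}-8t\sigma_{n}'+8(1-(-1)^{n})t]$ equals $4g_{n}-8sg_{n}'+O(n^{-2})$, and $(-1)^{n}t[2(1-(-1)^{n})-\sigma_{n}'-2t\sigma_{n}'']^{2}$ equals $(-1)^{n}s n^{2}(g_{n}'+2sg_{n}'')^{2}+O(1)$, so the whole brace is $(-1)^{n}s n^{2}\bigl[4(g_{n}')^{2}(g_{n}-2sg_{n}')-(g_{n}'+2sg_{n}'')^{2}\bigr]+O(1)$. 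Hence the left side of \eqref{sig} is $s^{2}n^{4}\bigl[4(g_{n}')^{2}(g_{n}-2sg_{n}')-(g_{n}'+2sg_{n}'')^{2}\bigr]^{2}+O(n^{2})$, while the right side is only $256\,n^{2}(g_{n}-2sg_{n}')^{3}+O(1)$; dividing by $n^{4}$ and letting $n\to\infty$ forces
\[
4\bigl(\sigma'(s)\bigr)^{2}\bigl(\sigma(s)-2s\sigma'(s)\bigr)=\bigl(\sigma'(s)+2s\sigma''(s)\bigr)^{2}
\]
for $s\neq0$, hence for all $s$ by continuity, and expanding the square rearranges this into \eqref{p3p}. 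The point requiring care here is the legitimacy of taking the limit in the derivatives $g_{n}',g_{n}''$; this can be included in the hypothesis or obtained afterwards from \eqref{pain3}--\eqref{sig} by an equicontinuity/normal-families argument.
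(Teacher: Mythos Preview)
Your argument is correct and follows essentially the same approach as the paper: both obtain the first integral $r_{n}^{2}=\sigma_{n}-2t\sigma_{n}'+2(1-(-1)^{n})t$, then differentiate it and combine with the remaining algebraic relation to produce an equation linear in $r_{n}$ (the paper writes this solution out explicitly, you square it directly), and finally substitute $r_{n}^{2}=\rho$ to reach \eqref{sig}; for the double scaling, both insert $t=s/n^{2}$ and extract the leading $n^{4}$ term. The only organizational difference is that the paper arrives at the key product identity via the pair $4\beta_{n}R_{n}$, $4\beta_{n}R_{n-1}$ (using \eqref{eq2} directly), whereas you work exclusively with $R_{n}$ through \eqref{ricca1} and the companion relation; since \eqref{ricca1} was itself obtained from \eqref{eq2}, the two routes are equivalent.
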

\begin{proof}
From (\ref{s23}) and with the definition of $\sigma_n(t)$, i.e. $\sigma_{n}(t)=-\sum_{j=0}^{n-1}R_{j}(t)$ , we get
\[
2\beta_{n}R_{n-1}(t)+2\beta_{n}R_{n}(t)=r_{n}^2(t)+2(1-(-1)^n)t-\sigma_{n}(t).
\]
Noting that $\beta_n(t)=(n+r_n(t))/2$, it follows from \eqref{eq2} that
\[
2\beta_{n}R_{n-1}(t)-2\beta_{n}R_{n}(t)=2t r_{n}'(t).
\]
The sum and difference of the above two equations result in
\[
4\beta_{n}R_{n-1}(t)=r_{n}^2(t)+2(1-(-1)^n)t-\sigma_{n}(t)+2t r_{n}'(t)
\]
and
\be\label{dif}
4\beta_{n}R_{n}(t)=r_{n}^2(t)+2(1-(-1)^n)t-\sigma_{n}(t)-2t r_{n}'(t),
\ee
respectively. The product of them leads to
\be\label{prod}
-16(-1)^n(n+r_{n}(t))tr_{n}(t)=\left(r_{n}^2(t)+2(1-(-1)^n)t-\sigma_{n}(t)\right)^2-4t^2(r_{n}'(t))^2,
\ee
where we have made use of (\ref{s21}) and (\ref{s22}).

From (\ref{s24}) and with the definition of $\sigma_n(t)$, we find
\be\label{pnt1}
4\mathrm{p}(n,t)=\sigma_{n}(t)+r_{n}(t)-n(n-1).
\ee
Substituting (\ref{pnt1}) into (\ref{pnt}) yields
\be\label{equ}
2(1-(-1)^n)t-2\beta_{n}R_{n}(t)=t \sigma_{n}'(t)+t r_{n}'(t).
\ee
Combining (\ref{dif}) with (\ref{equ}) to eliminate $\beta_{n}R_{n}(t)$, we establish the relation between $r_{n}(t)$ and $\sigma_{n}(t)$:
\be\label{rn}
r_{n}^2(t)=\sigma_{n}(t)-2t \sigma_{n}'(t)+2(1-(-1)^n)t.
\ee
Taking a derivative on both sides of this equation gives
$$
r_{n}'(t)=\frac{2(1-(-1)^n)-\sigma_{n}'(t)-2t\sigma_{n}''(t)}{2r_{n}(t)},
$$
so that, by using \eqref{rn} again, we have
\bea
\left(r_{n}'(t)\right)^2&=&\frac{\left[2(1-(-1)^n)-\sigma_{n}'(t)-2t\sigma_{n}''(t)\right]^2}{4r_{n}^2(t)}\nonumber\\
&=&\frac{\left[2(1-(-1)^n)-\sigma_{n}'(t)-2t\sigma_{n}''(t)\right]^2}{4\left[\sigma_{n}(t)-2t \sigma_{n}'(t)+2(1-(-1)^n)t\right]}.\label{rnp}
\eea
Plugging (\ref{rn}) and (\ref{rnp}) into (\ref{prod}), we find a linear equation for $r_n(t)$ and the solution is given by
\[
r_{n}(t)=-\frac{4\sigma_{n}(t)-4(1+(-1)^n)t \sigma_{n}'(t)+(-1)^n t (\sigma_{n}'(t))^2}{4n}+\frac{(-1)^nt\left[2(1-(-1)^n)-\sigma_{n}'(t)-2t\sigma_{n}''(t)\right]^2}{16n\left[\sigma_{n}(t)-2t \sigma_{n}'(t)+2(1-(-1)^n)t\right]}.
\]
Inserting it into (\ref{rn}), we finally arrive at \eqref{sig}.

To continue, we suppose that $n\rightarrow\infty$ and $t\rightarrow 0$ such that $s:=n^2 t$ is fixed, and define
$$
\sigma(s):=\lim_{n\rightarrow\infty}\sigma_{n}\left(\frac{s}{n^2}\right).
$$
After the change of variables, equation (\ref{sig}) becomes
\begin{small}
\bea
&&s^2\Big[4s^2 (\sigma''(s))^2+4s\sigma'(s)\sigma''(s)+8s(\sigma'(s))^3-4\sigma(s)(\sigma'(s))^2+(\sigma'(s))^2\Big]^2\nonumber\\
&-&\frac{1}{n^2}\Big\{64 (1-(-1)^n) s^5 (\sigma ''(s))^3+(\sigma ''(s))^2 \big[320 s^5 (\sigma '(s))^2+192 (-1)^n s^5 (\sigma '(s))^2
-128 s^4 \sigma (s) \sigma '(s)\nonumber\\
&-&384 (-1)^n s^4 \sigma (s) \sigma '(s)+96(1- (-1)^n) s^4 \sigma '(s)+128 (-1)^n s^3 \sigma^2 (s)\big]\nonumber\\
&+&\sigma ''(s) \big[448 s^4 (\sigma '(s))^3+64 (-1)^n s^4 (\sigma '(s))^3-192 s^3 \sigma (s) (\sigma '(s))^2
-320 (-1)^n s^3 \sigma (s) (\sigma '(s))^2\nonumber\\
&+&48(1- (-1)^n) s^3 (\sigma '(s))^2+128 (-1)^n s^2 \sigma^2 (s) \sigma '(s)\big]\nonumber\\
&+&\left(640 s^4+384 (-1)^n s^4\right) (\sigma '(s))^5-\left[576 s^3 \sigma (s)
+960 (-1)^n s^3 \sigma (s)-144 s^3+16 (-1)^n s^3\right] (\sigma '(s))^4\nonumber\\
&+&\left[128 s^2 \sigma^2 (s)+640 (-1)^n s^2 \sigma^2 (s)-64 (1+(-1)^n) s^2 \sigma (s)-2048 s^3+8(1- (-1)^n) s^2\right] (\sigma '(s))^3\nonumber\\
&-&\left[128 (-1)^n s \sigma^3 (s)-32 (-1)^n s \sigma^2 (s)-3072 s^2 \sigma (s)\right] (\sigma '(s))^2-1536 s \sigma^2 (s) \sigma '(s)+256 \sigma^3 (s)\Big\}\nonumber\\
&+&O\left(\frac{1}{n^4}\right)=0.\nonumber
\eea
\end{small}
Letting $n\rightarrow\infty$ and keeping only the highest order term, we obtain \eqref{p3p}, following Min and Chen \cite{Min2018}.
\end{proof}
\noindent $\mathbf{Remark.}$
We call (\ref{p3p}) the $\sigma$ form of the Painlev\'{e} III$'$ since it is from the Painlev\'{e} III$'$ equation.

In the end of this section, we show the integral representation of $D_n(t)$ in terms of $R_n(t)$.
\begin{theorem}
$$
\ln\frac{ D_n(t)}{D_{n}(0)}=\int_{0}^{t}\left[\frac{1}{4}+2s-nR_{n}(s)-\frac{R_{n}^2(s)}{4}-\frac{sR_{n}'(s)}{R_{n}(s)}
+\frac{s^2((R_{n}'(s))^2-4)}{R_{n}^2(s)}\right]\frac{ds}{2s}
$$
\end{theorem}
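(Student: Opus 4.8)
The plan is to express $2t\,\frac{d}{dt}\ln D_n(t)$ in terms of $R_n(t)$ and $R_n'(t)$ alone, and then divide by $2t$ and integrate from $0$ to $t$. We already know from the identity just above \eqref{eq1} that $\sigma_n(t)=2t\,\frac{d}{dt}\ln D_n(t)$, where $\sigma_n(t)=-\sum_{j=0}^{n-1}R_j(t)$, so what we really need is a \emph{closed-form} expression for $\sigma_n(t)$ as a rational function of $R_n(t)$ and $R_n'(t)$ (with an explicit polynomial-in-$t$ part). The natural route is to start from \eqref{s23}, which, rewritten via $\sum_{j=0}^{n-1}R_j(t)=-\sigma_n(t)$ and the relations $\beta_n=(n+r_n)/2$ and $-2(-1)^n t\,r_n=\beta_n R_n R_{n-1}$, becomes
\be\label{planA}
\sigma_{n}(t)=r_{n}^2(t)+2(1-(-1)^n)t-(n+r_n(t))R_n(t)+\frac{4(-1)^n t\,r_n(t)}{R_n(t)},\nonumber
\ee
so $\sigma_n$ is already a rational function of $r_n$ and $R_n$. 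Hence it suffices to eliminate $r_n(t)$ in favour of $R_n(t)$ and $R_n'(t)$ using the Riccati equation \eqref{ricca2}, namely $2tR_n'(t)=R_n^2(t)+(1-2r_n(t))R_n(t)-4(-1)^n t$, which gives the clean solution
\be\label{planB}
r_n(t)=\frac{R_n^2(t)+R_n(t)-4(-1)^n t-2tR_n'(t)}{2R_n(t)}.\nonumber
\ee

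The first step, then, is to substitute \eqref{planB} into the closed form for $\sigma_n(t)$ and simplify. I expect the $(-1)^n$ terms to combine favourably: for instance $r_n^2$ contributes a term $\frac{(R_n^2+R_n-4(-1)^nt-2tR_n')^2}{4R_n^2}$ whose cross-terms with the $-4(-1)^nt\,r_n/R_n$-type pieces should cancel the parity-dependent part of $2(1-(-1)^n)t$, leaving something manifestly independent of the sign $(-1)^n$. After this algebra I anticipate arriving at exactly
$$
\sigma_n(t)=\tfrac14+2t-nR_n(t)-\tfrac14 R_n^2(t)-\frac{tR_n'(t)}{R_n(t)}+\frac{t^2\big((R_n'(t))^2-4\big)}{R_n^2(t)},
$$
which is precisely the bracketed integrand of the theorem. (As a consistency check one can differentiate this expression using \eqref{pain3} to verify it is compatible with $\sigma_n'=-R_n$, i.e. $\sigma_n(t)-\sigma_{n+1}(t)=R_n$; but the direct algebraic route above is cleaner.)

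The second step is routine: divide the displayed identity by $2t$ to get $\frac{d}{dt}\ln D_n(t)$ as the integrand of the theorem, integrate both sides from $0$ to $t$, and use $\ln D_n(t)-\ln D_n(0)=\int_0^t \frac{d}{ds}\ln D_n(s)\,ds$. Here one must check that the integral converges at the lower endpoint $s=0$: since $R_n(t)\to 0$ as $t\to 0$ (clear from $R_n(t)=\frac{2t}{h_n}\int \frac{1}{y^2}P_n^2 w\,dy$) and in fact $R_n(t)\sim c\sqrt{t}$ for small $t$ from the Riccati equation \eqref{ricca2} (balancing $2tR_n'\approx R_n^2-4(-1)^nt$ forces $R_n^2\approx$ const$\cdot t$ when $(-1)^n=-1$, or a different power when $(-1)^n=1$), the potentially singular terms $-\frac{tR_n'}{R_n}$ and $\frac{t^2(R_n'^2-4)}{R_n^2}$ combine with $\frac14$ to a finite limit; this is the one genuine subtlety and the place where the argument needs care. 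The main obstacle is therefore not conceptual but the bookkeeping in Step~1 — carrying the $(-1)^n$ through the substitution and trusting that every parity-dependent term cancels — together with the endpoint-integrability check in Step~2; everything else follows immediately from the already-established relations \eqref{eq1}, \eqref{ricca1}, \eqref{ricca2} and \eqref{s23}.
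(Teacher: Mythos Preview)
Your approach is essentially the same as the paper's: both obtain the closed form
\[
\sigma_n(t)=r_n^2+2(1-(-1)^n)t+\frac{4(-1)^n t\,r_n}{R_n}-(n+r_n)R_n
\]
(the paper gets it by equating \eqref{s25} with \eqref{pnt1}, you get it directly from \eqref{s23} with \eqref{s21}--\eqref{s22}; these are equivalent), then eliminate $r_n$ via the Riccati equation \eqref{ricca2} and integrate $\sigma_n(t)/(2t)$. Your treatment is in fact slightly more complete, since you raise the integrability issue at $s=0$ that the paper's proof silently passes over; your small-$t$ heuristic there is loose but the concern is legitimate. (One minor slip in your aside: $\sigma_n'(t)\neq -R_n(t)$; the relation $\sigma_n-\sigma_{n+1}=R_n$ is a difference in $n$, not a $t$-derivative.)
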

\begin{proof}
substituting (\ref{s25}) into (\ref{pnt1}), and using (\ref{ricca2}) to eliminate $r_{n}(t)$, we obtain the expression of $\sigma_{n}(t)$ in terms of $R_{n}(t)$:
$$
\sigma_{n}(t)=\frac{1}{4}+2t-nR_{n}(t)-\frac{R_{n}^2(t)}{4}-\frac{tR_{n}'(t)}{R_{n}(t)}+\frac{t^2((R_{n}'(t))^2-4)}{R_{n}^2(t)}.
$$
The theorem is established by using the relation $\sigma_n(t)=2t\frac{d}{dt}\ln D_n(t)$.
\end{proof}

\section{Double Scaling Analysis}
In this section, we consider the large $n$ asymptotic behavior of $R_n(t), \sigma_n(t)$ and $D_n(t)$ under a double scaling. We connect our problem with the Hankel determinant generated by a singularly perturbed Laguerre weight which was studied in \cite{ChenIts,Chen2015}. Using their results, we are able to obtain the asymptotics of the scaled Hankel determinant, including the constant term.

Let $\tilde{D}_{n}(t,\alpha)$ be the Hankel determinant generated by the singularly perturbed Laguerre weight $x^{\alpha}\mathrm{e}^{-x-\frac{t}{x}}$, $x\in [0,\infty),\;\alpha>-1,\;t>0$, namely,
$$
\tilde{D}_{n}(t,\alpha):=\det\left(\int_{0}^{\infty}x^{i+j}x^{\alpha}\mathrm{e}^{-x-\frac{t}{x}}dx\right)_{i,j=0}^{n-1}.
$$
The orthogonality reads,
$$
\int_{0}^{\infty}\tilde{P}_{j}(x,\alpha)\tilde{P}_{k}(x,\alpha)x^{\alpha}\mathrm{e}^{-x-\frac{t}{x}}dx=\tilde{h}_{j}(t,\alpha)\delta_{jk},
$$
where $\tilde{P}_{j}(x,\alpha)$ are the monic polynomials of degree $j$ orthogonal with respect to the weight  $x^{\alpha}\mathrm{e}^{-x-\frac{t}{x}}$.

This Hankel determinant $\tilde{D}_{n}(t,\alpha)$ was studied by Chen and Its \cite{ChenIts}, and they obtained the following result.
\begin{lemma}
The quantity
\[H_{n}(t,\alpha):=t \frac{d}{dt}\ln \tilde{D}_{n}(t,\alpha)\]
satisfies the following non-linear second order differential equation,
\begin{small}
\be\label{p3}
(t H_{n}''(t,\alpha))^2=\left[n-(2n+\alpha)H_{n}'(t,\alpha)\right]^2-4\left[n(n+\alpha)+t H_{n}'(t,\alpha)-H_{n}(t,\alpha)\right]H_{n}'(t,\alpha)(H_{n}'(t,\alpha)-1).
\ee
\end{small}
\end{lemma}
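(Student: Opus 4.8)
Equation~(\ref{p3}) is the Jimbo--Miwa--Okamoto $\sigma$-form of a Painlev\'{e} III$'$ attached to the singularly perturbed Laguerre weight, and I would derive it by the same ladder-operator argument used in Sections~2 and 3 above, transposed from the whole line to $[0,\infty)$. First I would record the potential for $w(x,t,\alpha)=x^{\alpha}\mathrm{e}^{-x-t/x}$: here $\mathrm{v}(z)=-\ln w=z+\frac{t}{z}-\alpha\ln z$, so $\mathrm{v}'(z)=1-\frac{t}{z^{2}}-\frac{\alpha}{z}$ and
$$
\frac{\mathrm{v}'(z)-\mathrm{v}'(y)}{z-y}=\frac{\alpha}{zy}+\frac{t}{zy^{2}}+\frac{t}{z^{2}y}.
$$
Feeding this into the definitions of $A_n(z)$ and $B_n(z)$ and using orthogonality together with the three-term recurrence (note that, unlike in the even Gaussian case, the diagonal recurrence coefficient here does not vanish, which adds some bookkeeping) shows that $A_n(z)$ and $B_n(z)$ are each quadratic polynomials in $1/z$; moreover the identity $\int_{0}^{\infty}\mathrm{v}'(x)P_n^{2}(x)w(x)dx=0$ (from $(P_n^{2}w)'$ integrating to zero at the endpoints, the boundary terms vanishing because of the factor $\mathrm{e}^{-t/x}$ at $0$) forces the coefficient of $1/z$ in $A_n(z)$ to equal $1$, so that $A_n(z)=\frac{1}{z}+\frac{R_n(t)}{z^{2}}$ with $R_n(t)=\frac{t}{h_n}\int_{0}^{\infty}y^{-1}P_n^{2}(y)w(y)dy$, and $B_n(z)$ has a similar shape with two further auxiliary functions, which I will call $r_n(t)$ and one more in analogy with Section~2.

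Next I would impose the compatibility identities $(S_1)$, $(S_2)$ and, above all, the sum rule $(S_2')$, matching coefficients of powers of $z$. This produces a closed algebraic/difference system linking $R_n$, $r_n$ and the third auxiliary quantity to the recurrence coefficients and to the partial sum $\sum_{j=0}^{n-1}R_j(t)$; just as in (\ref{s23})--(\ref{s25}), the role of $(S_2')$ is to render that partial sum a rational function of the level-$n$ data. Since $\tilde D_n=\prod_{j=0}^{n-1}\tilde h_j$ and differentiating $\int_{0}^{\infty}P_j^{2}w\,dx=\tilde h_j$ in $t$ gives, via $\partial_t w=-w/x$ and orthogonality, $t\,\tilde h_j'/\tilde h_j=-R_j(t)$, one finds $H_n(t,\alpha)=t\frac{d}{dt}\ln\tilde D_n=-\sum_{j=0}^{n-1}R_j(t)$, the Laguerre counterpart of the $\sigma_n(t)$ of Section~2.

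Then I would put in the $t$-dynamics: differentiating $\int_{0}^{\infty}P_jP_{j-1}w\,dx=0$ and $\int_{0}^{\infty}P_jP_{j-2}w\,dx=0$ in $t$ and inserting the recurrence yields Toda-type first-order differential relations for $\beta_n$ and for the $t$-derivatives of the subleading coefficients of $P_n$, all expressed in the same auxiliary variables. Combining these with the difference system and eliminating the auxiliary variables in turn --- first obtaining a Riccati equation for $R_n$ and a linear equation for $r_n$, the analogues of (\ref{ricca1})--(\ref{ricca2}), then taking the products and derivatives used in (\ref{prod})--(\ref{rn}) --- collapses everything onto a single second-order ODE for $H_n$; tidying it and fixing the integration constants should reproduce (\ref{p3}) exactly. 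A natural cross-check is to verify directly that (\ref{p3}) is the Okamoto $\sigma$-form of $P_{\mathrm{III'}}$ with the expected parameters, or to rederive it from the Hamiltonian/$\tau$-function structure of the isomonodromic deformation for which $\tilde D_n$ is the tau function.

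I expect the elimination to be the main obstacle. Because $A_n$ and $B_n$ each carry both $z^{-1}$ and $z^{-2}$ poles and the diagonal recurrence coefficient does not vanish, the intermediate system has more unknowns than in the symmetric Gaussian problem, and reducing it to one clean equation for $H_n$ --- without picking up spurious branches when solving the quadratic (Riccati) relations --- requires care. A secondary technical point is controlling the $t\to 0^{+}$ behaviour of $H_n$, which is needed both to justify the manipulations and to distinguish (\ref{p3}) from an equation that differs from it only by lower-order terms.
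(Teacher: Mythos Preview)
The paper does not prove this lemma at all: it is quoted verbatim from Chen--Its \cite{ChenIts} (see the sentence introducing the lemma), so there is no in-paper argument to compare against. Your sketch is, in outline, exactly the Chen--Its derivation --- compute $A_n,B_n$ for the singularly perturbed Laguerre weight, impose $(S_1),(S_2),(S_2')$, identify $H_n=-\sum_{j<n}R_j$ via $t\,\tilde h_j'/\tilde h_j=-R_j$, add the Toda-type $t$-differential relations, and eliminate down to a single second-order ODE for $H_n$ --- and your computation of $\mathrm{v}'$, the difference quotient, and the normalization $A_n(z)=z^{-1}+R_n(t)z^{-2}$ via $\int_0^\infty \mathrm{v}'P_n^2w=0$ are all correct.

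One small correction to your sketch: in the Laguerre problem the auxiliary $r_n$ satisfies a genuine Riccati equation, not a first-order \emph{linear} one; the linearity of (\ref{ricca1}) is a special feature of the even Gaussian weight here (the paper itself flags this in the Remark following (\ref{ricca2}), citing \cite{ChenIts} among the Riccati cases). This does not affect the strategy --- one still solves one Riccati for $r_n$ in terms of $R_n$ and substitutes into the other --- but you should expect the intermediate algebra to be slightly heavier than you suggest.
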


\noindent $\mathbf{Remark.}$ The equation (\ref{p3}) can be transformed to the Jimbo-Miwa-Okamoto $\sigma$-form of the Painlev\'{e} III \cite{Jimbo1981}; see \cite{ChenIts} for the detailed explanation.

From the orthogonality condition (\ref{or}), we find for $n=0,1,2,\ldots,$
\bea
h_{2n}&=&\int_{-\infty}^{\infty}P_{2n}^2(x)\mathrm{e}^{-x^2-\frac{t}{x^2}}dx\nonumber\\
&=&2\int_{0}^{\infty}P_{2n}^2(x)\mathrm{e}^{-x^2-\frac{t}{x^2}}dx\nonumber\\
&=&\int_{0}^{\infty}P_{2n}^2(\sqrt{y})y^{-\frac{1}{2}}\mathrm{e}^{-y-\frac{t}{y}}dy\nonumber\\
&=&\int_{0}^{\infty}\tilde{P}_{n}^2\left(y,-\frac{1}{2}\right)y^{-\frac{1}{2}}\mathrm{e}^{-y-\frac{t}{y}}dy\nonumber\\
&=&\tilde{h}_{n}\left(t,-\frac{1}{2}\right),\nonumber
\eea
and
\bea
h_{2n+1}&=&\int_{-\infty}^{\infty}P_{2n+1}^2(x)\mathrm{e}^{-x^2-\frac{t}{x^2}}dx\nonumber\\
&=&2\int_{0}^{\infty}P_{2n+1}^2(x)\mathrm{e}^{-x^2-\frac{t}{x^2}}dx\nonumber\\
&=&\int_{0}^{\infty}P_{2n+1}^2(\sqrt{y})y^{-\frac{1}{2}}\mathrm{e}^{-y-\frac{t}{y}}dy\nonumber\\
&=&\int_{0}^{\infty}\left(y^{-\frac{1}{2}}P_{2n+1}(\sqrt{y})\right)^2y^{\frac{1}{2}}\mathrm{e}^{-y-\frac{t}{y}}dy\nonumber\\
&=&\int_{0}^{\infty}\tilde{P}_{n}^2\left(y,\frac{1}{2}\right)y^{\frac{1}{2}}\mathrm{e}^{-y-\frac{t}{y}}dy\nonumber\\
&=&\tilde{h}_{n}\left(t,\frac{1}{2}\right).\nonumber
\eea
Hence, we establish the relation between $D_{n}(t)$ and $\tilde{D}_{n}(t,\alpha)$, with $n=0,1,2,\ldots$,
\be\label{hd1}
D_{2n}(t)=\tilde{D}_{n}\left(t,\frac{1}{2}\right)\tilde{D}_{n}\left(t,-\frac{1}{2}\right),
\ee
\be\label{hd2}
D_{2n+1}(t)=\tilde{D}_{n}\left(t,\frac{1}{2}\right)\tilde{D}_{n+1}(t,-\frac{1}{2}).
\ee
Furthermore, since $\sigma_n(t)=2t\frac{d}{dt}\ln D_{n}(t)$ and $H_{n}(t,\alpha)=t \frac{d}{dt}\ln \tilde{D}_{n}(t,\alpha)$, we have for $n=0,1,2,\ldots,$
\be\label{re3}
\sigma_{2n}(t)=2\left[H_{n}\left(t,\frac{1}{2}\right)+H_{n}\left(t,-\frac{1}{2}\right)\right],
\ee
\be\label{re4}
\sigma_{2n+1}(t)=2\left[H_{n}\left(t,\frac{1}{2}\right)+H_{n+1}\left(t,-\frac{1}{2}\right)\right].
\ee
\noindent $\mathbf{Remark.}$ We could not derive the second order differential equation satisfied by $\sigma_{n}(t)$ directly from (\ref{p3}) and the above relations, since (\ref{p3}) is a non-linear differential equation.

From the definition of $R_{n}(t)$ and using the same method as above, we find
\be\label{re1}
R_{2n}(t)=2a_{n}\left(t,-\frac{1}{2}\right),
\ee
\be\label{re2}
R_{2n+1}(t)=2a_{n}\left(t,\frac{1}{2}\right),
\ee
for $n=0,1,2,\ldots$, where $a_{n}(t,\alpha)$ is defined in \cite{ChenIts} by
$$
a_{n}(t,\alpha):=\frac{t}{\tilde{h}_{n}(t,\alpha)}\int_{0}^{\infty}\frac{\tilde{P}_{n}^2(y,\alpha)}{y}y^{\alpha}\mathrm{e}^{-y-\frac{t}{y}}dy.
$$

By using Dyson's Coulomb fluid method, Chen and Chen \cite{Chen2015} derived the asymptotic expansions for the scaled $a_n(t),H_n(t,\alpha)$ and $\tilde{D}_{n}(t,\alpha)$ under the assumption that $n\rightarrow\infty$ and $t\rightarrow 0$ such that $s=(2n+1+\alpha)t$ or $s=(2n+1)t$ is fixed. Based on their results, we establish the asymptotics for our scaled $R_n(t),\sigma_n(t)$ and $D_n(t)$.

\begin{theorem}
Define
\[C_{1}(s):=\lim_{n\rightarrow\infty}\frac{R_{2n}\left(\frac{s}{2n+1}\right)}{\frac{s}{2n+1}},\qquad
C_{2}(s):=\lim_{n\rightarrow\infty}\frac{R_{2n+1}\left(\frac{s}{2n+1}\right)}{\frac{s}{2n+1}}.\]
 We have, for $s\rightarrow 0^{+}$,
$$
C_{1}(s)=-4+\frac{32s}{3}-\frac{256s^2}{15}+\frac{8192s^3}{315}-\frac{311296 s^4}{8505}+\frac{7733248 s^5}{155925}+O(s^6),
$$
$$
C_{2}(s)=4+\frac{32s}{3}+\frac{256s^2}{15}+\frac{8192s^3}{315}+\frac{311296 s^4}{8505}+\frac{7733248 s^5}{155925}+O(s^6),
$$
and for $s\rightarrow\infty$,
$$
C_{1}(s)=2s^{-\frac{1}{3}}+\frac{1}{3}s^{-\frac{2}{3}}+\frac{1}{108}s^{-\frac{4}{3}}-\frac{1}{648}s^{-\frac{5}{3}}+\frac{1}{324}s^{-2}-\frac{7}{5832 }s^{-\frac{7}{3}}+\frac{5}{1728 }s^{-\frac{8}{3}}+O(s^{-3}),
$$
$$
C_{2}(s)=2s^{-\frac{1}{3}}-\frac{1}{3}s^{-\frac{2}{3}}-\frac{1}{108}s^{-\frac{4}{3}}-\frac{1}{648}s^{-\frac{5}{3}}-\frac{1}{324}s^{-2}-\frac{7}{5832 }s^{-\frac{7}{3}}-\frac{5}{1728 }s^{-\frac{8}{3}}+O(s^{-3}).
$$
\end{theorem}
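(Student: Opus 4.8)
The plan is to transfer the double-scaling asymptotics of the singularly perturbed Laguerre problem, established by Chen and Chen \cite{Chen2015}, through the quadratic substitution $y=x^{2}$ that already gave the exact identities \eqref{re1}--\eqref{re2}. Since $R_{2n}(t)=2a_{n}(t,-\frac{1}{2})$ and $R_{2n+1}(t)=2a_{n}(t,\frac{1}{2})$, we have at once
$$
C_{1}(s)=2\lim_{n\to\infty}\frac{a_{n}\!\left(\frac{s}{2n+1},-\frac{1}{2}\right)}{\frac{s}{2n+1}},\qquad
C_{2}(s)=2\lim_{n\to\infty}\frac{a_{n}\!\left(\frac{s}{2n+1},\frac{1}{2}\right)}{\frac{s}{2n+1}},
$$
so the entire theorem reduces to the behaviour of the scaled ratio $a_{n}(t,\alpha)/t$ at the two values $\alpha=\pm\frac{1}{2}$, which is precisely the quantity whose expansions \cite{Chen2015} produces by Dyson's Coulomb-fluid method.

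The first genuine step is to reconcile the two natural scaling variables. Chen and Chen keep $\tilde s:=(2n+1+\alpha)t$ fixed, whereas here $t=\frac{s}{2n+1}$; putting $\tilde s:=s\cdot\frac{2n+1+\alpha}{2n+1}$ one has the exact identity $\frac{s}{2n+1}=\frac{\tilde s}{2n+1+\alpha}$ together with $\tilde s\to s$ as $n\to\infty$ (for $\alpha=-\frac{1}{2}$ this $\tilde s$ equals $(2n+\frac{1}{2})t$, and for $\alpha=+\frac{1}{2}$ it equals $(2n+\frac{3}{2})t$). Consequently the ratio inside $C_{1}(s)$ is $2\,a_{n}\!\big(\tilde s/(2n+1+\alpha),\alpha\big)\big/\big(\tilde s/(2n+1+\alpha)\big)$ with $\alpha=-\frac{1}{2}$ and argument $\tilde s\to s$, and similarly for $C_{2}(s)$ with $\alpha=+\frac{1}{2}$. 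Granting that the expansions of \cite{Chen2015} for this quantity hold locally uniformly in $\tilde s$, one may let the argument slide along the convergent sequence $\tilde s$; this shows that the limits defining $C_{1}(s)$ and $C_{2}(s)$ exist and equal twice the Chen--Chen scaling-limit function evaluated at $s$, with $\alpha=-\frac{1}{2}$ for $C_{1}$ and $\alpha=+\frac{1}{2}$ for $C_{2}$.

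It then only remains to substitute $\alpha=\pm\frac{1}{2}$ into the explicit small-$s$ and large-$s$ series of \cite{Chen2015}, multiply by $2$, and collect powers of $s$ (respectively of $s^{1/3}$). All the differences between the two displayed lines of the statement -- the leading values $-4$ versus $4$, the alternating versus non-alternating signs of the small-$s$ coefficients, and the sign flips of the relevant powers of $s^{-1/3}$ for large $s$ -- come from the odd-in-$\alpha$ contributions of the Chen--Chen formulas taken at the opposite values $\alpha=\mp\frac{1}{2}$. As a self-contained cross-check, inserting $t=\frac{s}{2n+1}$ and the ansatz $R_{2n}(t)=\frac{s}{2n+1}C_{1}(s)+o(1/n)$, $R_{2n+1}(t)=\frac{s}{2n+1}C_{2}(s)+o(1/n)$ directly into the Painlev\'{e} III$'$ equation \eqref{pain3} makes the cubic term $R_{n}^{3}/(4t^{2})$ drop to the next order, and, after the powers of $2n+1$ are accounted for, yields in the limit (writing $C$ for $C_{i}(s)$)
$$
s\,C'' + 2C' = \frac{(C+sC')^{2}}{sC} - \frac{C+sC'}{s} + \frac{C^{2}}{2} - \frac{\varepsilon_{i}}{s} - \frac{4}{sC},
$$
with $\varepsilon_{1}=+1$ (from $(-1)^{2n}=1$) and $\varepsilon_{2}=-1$ (from $(-1)^{2n+1}=-1$). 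Requiring $C_{i}$ to be analytic at $s=0$ removes the residual $1/s$ singularity and forces $C_{1}(0)=-4$, $C_{2}(0)=4$, after which the recursion order by order determines all further Taylor coefficients; and the balance $C^{2}/2\sim 4/C$ as $s\to\infty$ gives $C_{i}(s)\sim 2s^{-1/3}$, with the subsequent orders reproducing the large-$s$ expansions.

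The main obstacle is the interchange of limits in the second step: one has to verify that the asymptotic expansions of \cite{Chen2015} for the scaled $a_{n}(t,\alpha)$ are uniform enough in the scaling parameter that the $O(1/n)$ mismatch between the scalings $(2n+1)t$ and $(2n+1+\alpha)t$ is harmless, i.e. that $\tilde s\to s$ may legitimately be fed into the limit function. (In the independent ODE check the corresponding point is to know that the formal $n\to\infty$ limit of \eqref{pain3} is actually satisfied by the limit of its solutions, not merely by the limiting equation.) Once this is secured, everything else is the bookkeeping of setting $\alpha=\pm\frac{1}{2}$ and rearranging the two series.
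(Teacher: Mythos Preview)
Your proof is correct and takes essentially the same route as the paper: reduce $C_{i}(s)$ to twice the Chen--Chen scaling limit $C(s,\alpha)$ at $\alpha=\mp\tfrac12$ via the identities \eqref{re1}--\eqref{re2}, then substitute $\alpha=\pm\tfrac12$ into the known small-$s$ and large-$s$ expansions of $C(s,\alpha)$ from \cite{Chen2015}. You are in fact more careful than the paper on one point---the paper simply asserts $C_{1}(s)=2C(s,-\tfrac12)$ and $C_{2}(s)=2C(s,\tfrac12)$ without commenting on the mismatch between the $(2n+1)t$ scaling here and the $(2n+1+\alpha)t$ scaling in \cite{Chen2015}, which you reconcile explicitly via $\tilde s\to s$---and your independent Painlev\'{e}~III$'$ consistency check is an addition the paper does not include.
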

\begin{proof}
From (\ref{re1}) and (\ref{re2}), we have
$$
C_{1}(s)=2C\left(s,-\frac{1}{2}\right),\;\;\;\;C_{2}(s)=2C\left(s,\frac{1}{2}\right),
$$
where
$$
C(s,\alpha):=\lim_{n\rightarrow\infty}\frac{a_{n}\left(\frac{s}{2n+1+\alpha},\alpha\right)}{\frac{s}{2n+1+\alpha}}.
$$
It is given by (3.1) and (3.2) in \cite{Chen2015} that $C(s,\alpha)$ has the following expansions: for $s\rightarrow 0^{+}$,
\bea
C(s,\alpha)&=&\frac{1}{\alpha}-\frac{s}{\alpha^2(\alpha^{2}-1)}+\frac{3s^2}{\alpha^3(\alpha^2-1)(\alpha^2-4)}
-\frac{6(2\alpha^2-3)s^3}{\alpha^4(\alpha^2-1)^2(\alpha^2-4)(\alpha^2-9)}\nonumber\\
&+&\frac{5(11\alpha^2-36)s^4}{\alpha^5(\alpha^2-1)^2(\alpha^2-4)(\alpha^2-9)(\alpha^2-16)}\nonumber\\
&-&\frac{3(91\alpha^6-1115\alpha^4+4219\alpha^2-3600)s^5}{\alpha^6(\alpha^2-1)^3(\alpha^2-4)^2(\alpha^2-9)(\alpha^2-16)(\alpha^2-25)}+O(s^6),\nonumber
\eea
and for $s\rightarrow\infty$,
\bea
C(s,\alpha)&=&s^{-\frac{1}{3}}-\frac{\alpha}{3}s^{-\frac{2}{3}}+\frac{\alpha(\alpha^2-1)}{81}s^{-\frac{4}{3}}+\frac{\alpha^2(\alpha^2-1)}{243}s^{-\frac{5}{3}}
+\frac{\alpha(\alpha^2-1)}{243}s^{-2}\nonumber\\
&-&\frac{2\alpha^2(\alpha^2-1)(2\alpha^2-11)}{6561}s^{-\frac{7}{3}}-\frac{5\alpha(\alpha^2-1)(\alpha^4-\alpha^2-15)}{19683}s^{-\frac{8}{3}}+O(s^{-3}).\nonumber
\eea
According to these two expansions, we obtain the desired results.
\end{proof}
\begin{theorem}
Let
\[\sigma_{1}(s):=\lim_{n\rightarrow\infty}\sigma_{2n}\left(\frac{s}{2n+1}\right),\qquad
\sigma_{2}(s):=\lim_{n\rightarrow\infty}\sigma_{2n+1}\left(\frac{s}{2n+1}\right).\]
We have,
for $s\rightarrow 0^{+}$,
$$
\sigma_{1}(s)=\sigma_{2}(s)=-\frac{16 s^2}{3}-\frac{2048 s^4}{315}-\frac{3866624 s^6}{467775}+O(s^8),
$$
and for $s\rightarrow\infty$,
$$
\sigma_{1}(s)=\sigma_{2}(s)=-3 s^{\frac{2}{3}}-\frac{1}{18}-\frac{1}{432}s^{-\frac{2}{3}}-\frac{7}{7776}s^{-\frac{4}{3}}-\frac{31}{34992}s^{-2}
+O(s^{-\frac{8}{3}}).
$$
\end{theorem}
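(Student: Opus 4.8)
The last theorem (Theorem 4.3) has two parts: a statement about the small-$s$ expansion of $\sigma_1(s)=\sigma_2(s)$ and a statement about the large-$s$ expansion. Both are obtained by combining the relations
\[
\sigma_{2n}(t)=2\Big[H_n\big(t,\tfrac12\big)+H_n\big(t,-\tfrac12\big)\Big],\qquad
\sigma_{2n+1}(t)=2\Big[H_n\big(t,\tfrac12\big)+H_{n+1}\big(t,-\tfrac12\big)\Big],
\]
established in \eqref{re3}--\eqref{re4}, with the known asymptotics of $H_n(t,\alpha)$ from Chen and Chen \cite{Chen2015}. The first thing I would do is record the scaled limit of $H_n$: define $\mathcal H(s,\alpha):=\lim_{n\to\infty}H_n\big(\tfrac{s}{2n+1+\alpha},\alpha\big)$, which is exactly the quantity whose expansions are tabulated in \cite{Chen2015}. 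Because the double scaling in Theorem 4.3 fixes $s=(2n+1)t$ rather than $(2n+1+\alpha)t$, I must pass from the variable $(2n+1)t$ to $(2n+1\pm\tfrac12)t$; since $(2n+1)t = (2n+1\pm\tfrac12)t \mp \tfrac12 t$ and $t=O(1/n)\to 0$, this shift is negligible in the limit, so $\lim_{n\to\infty}H_n\big(\tfrac{s}{2n+1},\pm\tfrac12\big)=\mathcal H(s,\pm\tfrac12)$. Hence
\[
\sigma_1(s)=\sigma_2(s)=2\Big[\mathcal H\big(s,\tfrac12\big)+\mathcal H\big(s,-\tfrac12\big)\Big],
\]
which in particular explains why $\sigma_1=\sigma_2$: both equal the same sum of Laguerre limits (the index shift $n\mapsto n+1$ on the $\alpha=-\tfrac12$ piece also disappears in the limit).

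\textbf{Small $s$.} Next I would take the small-$s$ expansion of $C(s,\alpha)$ quoted just above (equivalently the expansion of $\mathcal H(s,\alpha)$ from \cite{Chen2015}), specialize to $\alpha=\tfrac12$ and $\alpha=-\tfrac12$, and add. The key structural point is that $\mathcal H(s,\alpha)$, as a function of $\alpha$, has an expansion in which the odd-in-$\alpha$ terms come with one sign and the even-in-$\alpha$ terms cancel or reinforce in a definite way; adding the $\alpha=\tfrac12$ and $\alpha=-\tfrac12$ values kills every term that is odd in $\alpha$ and doubles those even in $\alpha$. One checks term by term that this leaves only the even powers $s^2, s^4, s^6,\dots$ with the stated coefficients $-\tfrac{16}{3}, -\tfrac{2048}{315}, -\tfrac{3866624}{467775}$. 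An independent cross-check I would include: since $R_{2n}(t)=2a_n(t,-\tfrac12)$ and $R_{2n+1}(t)=2a_n(t,\tfrac12)$, and $\sigma_n(t)=-\sum_{j=0}^{n-1}R_j(t)$, one can also obtain $\sigma_1'(s), \sigma_2'(s)$ from $C_1(s), C_2(s)$ of Theorem 4.2 via a Riemann-sum/Euler--Maclaurin argument and integrate, which should reproduce the same series and confirm the constant of integration is zero.

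\textbf{Large $s$.} The large-$s$ part is the same computation with the large-$s$ expansion of $C(s,\alpha)$ (or of $\mathcal H(s,\alpha)$) from \cite{Chen2015}: specialize to $\alpha=\pm\tfrac12$ and add. Again the terms odd in $\alpha$ cancel; the surviving terms give the leading $-3s^{2/3}$, the constant $-\tfrac{1}{18}$, and the descending fractional powers $s^{-2/3}, s^{-4/3}, s^{-2}$ with coefficients $-\tfrac{1}{432}, -\tfrac{7}{7776}, -\tfrac{31}{34992}$. The main obstacle, such as it is, is bookkeeping: one must be careful that the large-$s$ expansion of $\mathcal H(s,\alpha)$ in \cite{Chen2015} is written with the right normalization and in the same variable $s=(2n+1+\alpha)t$, so that the $\mp\tfrac12 t$ shift is genuinely lower order and does not contaminate the constant term $-\tfrac{1}{18}$ (this is the one place a naive argument could go wrong, since a constant is exactly the order at which an $O(t)=O(1/n)$ error could survive if it were multiplied by something growing like $n$; here it is not, because $H_n$ and its $t$-derivative are $O(1)$ in the scaled variable). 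Once that is checked, both expansions follow by direct substitution, and the equality $\sigma_1(s)=\sigma_2(s)$ is automatic from the representation above.
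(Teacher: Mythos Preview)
Your proposal is correct and follows essentially the same route as the paper: express $\sigma_{2n}$ and $\sigma_{2n+1}$ via \eqref{re3}--\eqref{re4} in terms of $H_n(t,\pm\tfrac12)$, pass to the double-scaling limit to obtain $\sigma_1(s)=\sigma_2(s)=2\big[\mathcal H(s,\tfrac12)+\mathcal H(s,-\tfrac12)\big]$, and then substitute the small-$s$ and large-$s$ expansions of $\mathcal H(s,\alpha)$ from \cite{Chen2015} at $\alpha=\pm\tfrac12$. Your additional remarks justifying that the shift between the scalings $(2n+1)t$ and $(2n+1+\alpha)t$ (and the index shift $n\mapsto n+1$) is negligible in the limit are more explicit than what the paper writes, but the argument is otherwise identical; the Riemann-sum cross-check you mention is not used in the paper and is unnecessary here.
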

\begin{proof}
From (\ref{re3}) and (\ref{re4}) we find
$$
\sigma_{1}(s)=\sigma_{2}(s)=2\left[\mathcal{H}\left(s,\frac{1}{2}\right)+\mathcal{H}\left(s,-\frac{1}{2}\right)\right],
$$
where
$$
\mathcal{H}(s,\alpha):=\lim_{n\rightarrow\infty}H_{n}\left(\frac{s}{2n+1+\alpha},\alpha\right).
$$
It is given by (3.3) and (3.4) in \cite{Chen2015} that $\mathcal{H}(s,\alpha)$ has the following expansions: for $s\rightarrow 0^{+}$,
\bea
\mathcal{H}(s,\alpha)&=&-\frac{s}{2\alpha}+\frac{s^2}{4\alpha^2(\alpha^{2}-1)}-\frac{s^3}{2\alpha^3(\alpha^2-1)(\alpha^2-4)}
+\frac{3(2\alpha^2-3)s^4}{4\alpha^4(\alpha^2-1)^2(\alpha^2-4)(\alpha^2-9)}\nonumber\\
&+&\frac{(11\alpha^2-36)s^5}{2\alpha^5(\alpha^2-1)^2(\alpha^2-4)(\alpha^2-9)(\alpha^2-16)}\nonumber\\
&+&\frac{(91\alpha^{6}-1115\alpha^4+4219\alpha^2-3600)s^6}{4\alpha^6(\alpha^2-1)^3(\alpha^2-4)^2(\alpha^2-9)(\alpha^2-16)(\alpha^2-25)}+O(s^7),\nonumber
\eea
and for $s\rightarrow\infty$,
\bea
\mathcal{H}(s,\alpha)&=&-\frac{3}{4}s^{\frac{2}{3}}+\frac{\alpha}{2}s^{\frac{1}{3}}+\frac{1-6\alpha^2}{36}+\frac{\alpha(\alpha^2-1)}{54}s^{-\frac{1}{3}}
+\frac{\alpha^2(\alpha^2-1)}{324}s^{-\frac{2}{3}}+\frac{\alpha(\alpha^2-1)}{486}s^{-1}\nonumber\\
&-&\frac{\alpha^2(\alpha^2-1)(2\alpha^2-11)}{8748}s^{-\frac{4}{3}}-\frac{\alpha(\alpha^2-1)(\alpha^4-\alpha^2-15)}{13122}s^{-\frac{5}{3}}\nonumber\\
&-&\frac{\alpha^2(\alpha^2-1)(8\alpha^2-33)}{26244}s^{-2}+O(s^{-\frac{7}{3}}).\nonumber
\eea
From them, we readily get the desired expansions.
\end{proof}
\begin{theorem}
Write
\[\Delta_{1}(s):=\lim_{n\rightarrow\infty}\frac{D_{2n}(\frac{s}{2n+1})}{D_{2n}(0)},\qquad
\Delta_{2}(s):=\lim_{n\rightarrow\infty}\frac{D_{2n+1}(\frac{s}{2n+1})}{D_{2n+1}(0)}.\]
We have for $s\rightarrow 0^{+}$,
$$
\Delta_{1}(s)=\Delta_{2}(s)=\exp\left(-\frac{4 s^2}{3}-\frac{256 s^4}{315}-\frac{966656 s^6}{1403325}+O(s^8)\right),
$$
and for $s\rightarrow\infty$,
$$
\Delta_{1}(s)=\Delta_{2}(s)=\exp\left(\frac{\ln 2}{12}+3\zeta'(-1)-\frac{\ln s}{36}-\frac{9}{4} s^{\frac{2}{3}}+\frac{1}{576}s^{-\frac{2}{3}}+\frac{7}{20736}s^{-\frac{4}{3}}+O(s^{-2})\right),
$$
where  $\zeta(\cdot)$ is the Riemann zeta function.
\end{theorem}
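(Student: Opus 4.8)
I would reduce the problem to the singularly perturbed Laguerre Hankel determinant $\tilde{D}_n(t,\alpha)$ through the factorizations (\ref{hd1}) and (\ref{hd2}). Writing $t=\frac{s}{2n+1}$ in
\[
\frac{D_{2n}(t)}{D_{2n}(0)}=\frac{\tilde{D}_n(t,\frac12)}{\tilde{D}_n(0,\frac12)}\cdot\frac{\tilde{D}_n(t,-\frac12)}{\tilde{D}_n(0,-\frac12)},\qquad
\frac{D_{2n+1}(t)}{D_{2n+1}(0)}=\frac{\tilde{D}_n(t,\frac12)}{\tilde{D}_n(0,\frac12)}\cdot\frac{\tilde{D}_{n+1}(t,-\frac12)}{\tilde{D}_{n+1}(0,-\frac12)},
\]
and noting that the variable $(2n+1+\alpha)t=s\bigl(1+\frac{\alpha}{2n+1}\bigr)$ natural to $\tilde{D}_n(\cdot,\alpha)$ tends to $s$, while the replacement of $\tilde{D}_n$ by $\tilde{D}_{n+1}$ does not alter the limit, one gets $\Delta_1(s)=\Delta_2(s)=\mathcal{D}(s,\frac12)\,\mathcal{D}(s,-\frac12)$, where $\mathcal{D}(s,\alpha):=\lim_{n\to\infty}\tilde{D}_n\bigl(\frac{s}{2n+1+\alpha},\alpha\bigr)/\tilde{D}_n(0,\alpha)$. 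This identity also accounts for the equality of the two scaled limits.

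It then remains to substitute the small-$s$ and large-$s$ asymptotic expansions of $\mathcal{D}(s,\alpha)$ established by Chen and Chen \cite{Chen2015} and to sum the $\alpha=\pm\frac12$ contributions; both amount to routine series manipulations. As a self-contained confirmation of the non-constant terms one may instead integrate the $\sigma$-function: since $\sigma_n(t)=2t\frac{d}{dt}\ln D_n(t)$ and $2t\frac{d}{dt}=2s\frac{d}{ds}$ in the scaled variable, $\ln\Delta_1(s)=\int_0^s\frac{\sigma_1(u)}{2u}\,du$ with $\sigma_1$ as in the preceding theorem and $\Delta_1(0)=1$. Termwise integration of the small-$s$ series of $\sigma_1$ reproduces $-\frac{4s^2}{3}-\frac{256s^4}{315}-\frac{966656 s^6}{1403325}+O(s^8)$, and termwise integration of the large-$s$ series reproduces the non-constant part of the large-$s$ answer, for instance $-3s^{2/3}\mapsto-\frac94 s^{2/3}$ and the constant $-\frac1{18}$ in $\sigma_1$ producing the $-\frac{\ln s}{36}$ term.

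The genuine obstacle is the additive constant $\frac{\ln 2}{12}+3\zeta'(-1)$ in the large-$s$ regime. Integrating $\sigma_1$ determines $\ln\Delta_1$ only relative to $s=0$, so this number is a connection constant, namely $\lim_{s\to\infty}\bigl[\ln\Delta_1(s)+\frac94 s^{2/3}+\frac{1}{36}\ln s\bigr]$, a convergent but nontrivial quantity, and it is exactly here that Dyson's constant enters. Rather than evaluate it head on, I would take the constant terms of the large-$s$ expansions of $\ln\mathcal{D}(s,\frac12)$ and $\ln\mathcal{D}(s,-\frac12)$ from \cite{Chen2015} (the Barnes $G$-type constants built from $\zeta'(-1)$ and $\ln 2$ attached to the perturbed Laguerre determinant) and verify that their sum equals $\frac{\ln 2}{12}+3\zeta'(-1)$. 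Everything else — matching the scaling variables, the $n\leftrightarrow n+1$ shift, and the termwise integration — is straightforward bookkeeping.
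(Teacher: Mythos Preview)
Your proposal is correct and follows essentially the same route as the paper: reduce via the factorizations (\ref{hd1})--(\ref{hd2}) to $\Delta_1(s)=\Delta_2(s)=\Delta(s,\tfrac12)\Delta(s,-\tfrac12)$, quote the small- and large-$s$ expansions of $\Delta(s,\alpha)$ from \cite{Chen2015}, and evaluate the constant $c(\tfrac12)+c(-\tfrac12)$ via Barnes $G$-values. The paper makes that last step explicit by using $c(\alpha)=\ln\bigl(G(\alpha+1)/(2\pi)^{\alpha/2}\bigr)$ together with Voros's formula $G(\tfrac12)=2^{1/24}\pi^{-1/4}e^{3\zeta'(-1)/2}$ to obtain $\tfrac{\ln 2}{12}+3\zeta'(-1)$; your integration of $\sigma_1$ is a nice additional cross-check that the paper does not include.
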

\begin{proof}
It follows from (\ref{hd1}) and (\ref{hd2}) that
$$
\Delta_{1}(s)=\Delta_{2}(s)=\Delta\left(s,\frac{1}{2}\right)\Delta\left(s,-\frac{1}{2}\right),
$$
where
$$
\Delta(s,\alpha):=\lim_{n\rightarrow\infty}\frac{\tilde{D}_{n}\left(\frac{s}{2n+1+\alpha},\alpha\right)}{\tilde{D}_{n}(0,\alpha)}.
$$
It is given in Theorem 5 of \cite{Chen2015} that $\Delta(s,\alpha)$ has the following expansions: for $s\rightarrow 0^{+}$,
\bea
\Delta(s,\alpha)&=&\exp\Big(-\frac{s}{2\alpha}+\frac{s^2}{8\alpha^2(\alpha^{2}-1)}-\frac{s^3}{6\alpha^3(\alpha^2-1)(\alpha^2-4)}
+\frac{3(2\alpha^2-3)s^4}{16\alpha^4(\alpha^2-1)^2(\alpha^2-4)(\alpha^2-9)}\nonumber\\
&-&\frac{(11\alpha^2-36)s^5}{10\alpha^5(\alpha^2-1)^2(\alpha^2-4)(\alpha^2-9)(\alpha^2-16)}\nonumber\\
&+&\frac{(91\alpha^{6}-1115\alpha^4+4219\alpha^2-3600)s^6}{24\alpha^6(\alpha^2-1)^3(\alpha^2-4)^2(\alpha^2-9)(\alpha^2-16)(\alpha^2-25)}+O(s^7)\Big),\nonumber
\eea
and for $s\rightarrow\infty$,
\bea
\Delta(s,\alpha)&=&\exp\Big(c(\alpha)-\frac{9}{8}s^{\frac{2}{3}}+\frac{3\alpha}{2}s^{\frac{1}{3}}+\frac{1-6\alpha^2}{36}\ln s-\frac{\alpha(\alpha^2-1)}{18}s^{-\frac{1}{3}}-\frac{\alpha^2(\alpha^2-1)}{216}s^{-\frac{2}{3}}\nonumber\\
&-&\frac{\alpha(\alpha^2-1)}{486}s^{-1}+\frac{\alpha^2(\alpha^2-1)(2\alpha^2-11)}{11664}s^{-\frac{4}{3}}+\frac{\alpha(\alpha^2-1)(\alpha^4-\alpha^2-15)}{21870}s^{-\frac{5}{3}}\Big.\nonumber\\
&+&\Big.O(s^{-2})\Big),\nonumber
\eea
where $c(\alpha)$ is a constant independent of $s$, and is found to be
\be\label{cons}
c(\alpha)=\ln\frac{G(\alpha+1)}{(2\pi)^{\frac{\alpha}{2}}},
\ee
with $G(\cdot)$ denoting the Barnes G-function.

According to the above expansions, we have for $s\rightarrow 0^{+}$,
$$
\Delta_{1}(s)=\Delta_{2}(s)=\exp\left(-\frac{4 s^2}{3}-\frac{256 s^4}{315}-\frac{966656 s^6}{1403325}+O(s^8)\right),
$$
and for $s\rightarrow\infty$,
$$
\Delta_{1}(s)=\Delta_{2}(s)=\exp\left(c\left(\frac{1}{2}\right)+c\left(-\frac{1}{2}\right)-\frac{\ln s}{36}-\frac{9}{4} s^{\frac{2}{3}}+\frac{1}{576}s^{-\frac{2}{3}}+\frac{7}{20736}s^{-\frac{4}{3}}+O(s^{-2})\right).
$$From (\ref{cons}), we find
$$
c\left(\frac{1}{2}\right)+c\left(-\frac{1}{2}\right)=\ln \left(G\left(\frac{1}{2}\right)G\left(\frac{3}{2}\right)\right)=\ln \left(G^2\left(\frac{1}{2}\right)\Gamma\left(\frac{1}{2}\right)\right).
$$
The equation (6.39) in Voros \cite{Voros} shows that
$$
G\left(\frac{1}{2}\right)=2^{\frac{1}{24}}\pi^{-\frac{1}{4}}\mathrm{e}^{\frac{3\zeta'(-1)}{2}},
$$
where $\zeta(\cdot)$ is the Riemann zeta function. It follows that
$$
c\left(\frac{1}{2}\right)+c\left(-\frac{1}{2}\right)=\frac{\ln 2}{12}+3\zeta'(-1).
$$
This completes the proof.
\end{proof}

\noindent $\mathbf{Remark.}$ The above constant $\frac{\ln 2}{12}+3\zeta'(-1)$ is called Dyson's constant, which was conjected by Dyson \cite{Dyson} in the asymptotic expansion of a Fredholm determinant with the sine kernel, and was found by Widom \cite{Widom} in the study on the asymptotics of the Toeplitz determinants. It was rigorously proved by Krasovsky \cite{Krasovsky} using the Riemann-Hilbert method, and by Ehrhardt \cite{Ehrhardt} using a different approach. Lyu, Chen and Fan \cite{Lyu2018} also obtained this constant in the study of the gap probability in the Gaussian unitary ensemble; see Theorem 4.1 therein.

\section{Conclusion}
We study the Hankel determinant $D_{n}(t)$ generated by a singularly perturbed Gaussian weight in this paper.
By using the ladder operator approach, we obtain three auxiliary quantities related to the Hankel determinant, $R_{n}(t)$, $r_{n}(t)$ and $\sigma_{n}(t)$. We show that each of them satisfies a non-linear second order difference equation and a non-linear second order differential equation, from which the Painlev\'{e} III$'$ appears. We also consider the large $n$ asymptotics of $R_{n}(t)$, $\sigma_{n}(t)$ and $D_{n}(t)$ under a double scaling $s=(2n+1)t,\;n\rightarrow\infty, t\rightarrow 0$ such that $s$ is fixed. The asymptotic expansions of the scaled $R_{n}(t)$, $\sigma_{n}(t)$ and $D_{n}(t)$ for large $s$ and small $s$ are obtained.

\section*{Acknowledgments}
Chao Min was supported by the Scientific Research Funds of Huaqiao University under grant number 600005-Z17Y0054.
Yang Chen was supported by the Macau Science and Technology Development Fund under grant numbers FDCT 130/2014/A3, FDCT 023/2017/A1 and by the University of Macau under grant numbers MYRG 2014-00011-FST, MYRG 2014-00004-FST.

\end{document}